\title{Methods for Collisions in Some \\ Algebraic Hash Functions}
\author{Simran Tinani\thanks{This research is supported by armasuisse Science and Technology.}}
\newcommand{\Z}{\mathbb{Z}}
\newcommand{\F}{\mathbb{F}}
\newcommand{\fp}{\mathbb{F}_p}
\newcommand{\f}{\mathbb{F}}
\newcommand{\fpk}{{\mathbb{F}_p}^k}
\newtheorem{theorem}{Theorem}
\newtheorem{problem}{Problem}
\newtheorem{lemma}{Lemma}
\newtheorem{proposition}{Proposition}
\newtheorem{corollary}{Corollary}
\theoremstyle{definition}
\newtheorem{definition}{Definition}
\newtheorem{example}{Example}
\theoremstyle{remark}
\newenvironment{pf}[1][\proofname]{\par
  \pushQED{\qed}%
  \normalfont \topsep0\p@\relax
  \trivlist
  \item[\hskip\labelsep\itshape
  #1\@addpunct{.}]\ignorespaces
}{%
  \popQED\endtrivlist\@endpefalse
}
\begin{document}
\maketitle

\begin{abstract} This paper focuses on devising methods for producing collisions in algebraic hash functions that may be seen as generalized forms of the well-known Z\'emor and Tillich-Z\'emor hash functions. In contrast to some of the previous approaches, we attempt to construct collisions in a structured and deterministic manner by constructing messages with triangular or diagonal hashes messages.  Our method thus provides an alternate deterministic approach to the method for finding triangular hashes in \cite{hardeasy}.
We also consider the generalized Tillich-Z\'emor hash functions over $\fp^k$ for $p\neq 2$, relating the generator matrices to a polynomial recurrence relation, and derive a closed form for any arbitrary power of the generators. We then provide conditions for collisions, and a method to maliciously design the system so as to facilitate easy collisions, in terms of this polynomial recurrence relation.
Our general conclusion is that it is very difficult in practice to achieve the theoretical collision conditions efficiently, in both the generalized Z\'emor and the generalized Tillich-Z\'emor cases. Therefore, although the techniques are interesting theoretically, in practice the collision-resistance of the generalized Z\'emor functions is reinforced. 
\end{abstract}

\section{Introduction}
Let $\mathcal{A}$ be an alphabet and $\mathcal{A}^*$ denote the set of all finite-length words in $\mathcal{A}$ and $\mathcal{A}^n$ denote the set of all words up to length $n$ in $\mathcal{A}$. A length $n$ hash function, or compression function, is a map $\mathcal{A}^*\rightarrow \mathcal{A}^n$ which takes messages of arbitrary length to fixed-length message digests. A hash function $h:\mathcal{A}^*\rightarrow \mathcal{A}^n$ is called a cryptographic hash function if it satisfies the following properties:

\begin{itemize}
\item Collision-resistance: it is computationally infeasible to find a pair $x, x'$ of distinct messages such that $h(x)=h(x')$.
\item Second pre-image resistance: given a message $x$, it is computationally infeasible to find another message $x'\neq x$ such that $h(x)=h(x')$.
\item One-wayness: given a hash value $y\in \mathcal{A}^n$ it is computationally infeasible to find a pre-image $x\in \mathcal{A}$ such that $h(x)=y$.
\end{itemize}

Cryptographic hash functions are also often required to exhibit the avalanche effect, under which a small modification in the message text causes a big change in the hash. This prevents the hash value from leaking information about the message string, and ensures no visible correlation between the hashes of related strings. A hash function that does not exhibit this property is called mallaeble.

Several widely used hash functions, including the NIST-standardized SHA (Secure Hash Algorithms) functions \cite{Penard2008OnTS}, are built from block ciphers. While the state of the art block cipher-based hash functions have shown considerable resiliency to attacks, their security is nevertheless heuristic; in other words, it does not reduce to a well-known difficult mathematical problem. The search for a provably secure hash function is therefore, at the least, of theoretical interest.

The idea of building hash functions from a group and its corresponding Cayley graph was introduced in \cite{Zemor1991hash}. 
The generic design of Cayley hash functions has several advantages over traditional hash functions. Firstly, their security is equivalent to some concise mathematical problem. Cayley hash functions are also inherently parallelizable, i.e.\ allow for simultaneous computation of the hash value of different parts of the message, and recombining these at the end. 
While the Tillich-Z\'emor hash functions are slower than SHA-256, they can be designed to have reasonably efficient implementations: in \cite{effic} it is stated that they can be made faster than SHA-1. In particular, when fields of characteristic 2 are used, the group law is the most efficient. 

There may also be other disadvantages of Cayley hash functions. By themselves, such hash functions are inherently mallaeble: given a hash $h(m)$ of an unknown message $m$, $h(x_1|| m || x_2)= h(x_1) h(m) h(x_2)$ for any texts $x_1, x_2$. Further, they lack preimage resistance for small messages. However, in \cite{effic}, the authors describe a heuristic modification which seemingly resolves these issues, as well as enhances the efficiency. Cayley hash functions have a simple, elegant, clear design related to hard mathematical problems. A very generic description is given below.

\begin{definition}[Cayley hash function]
   Let $G$ be a finite group with a set of generators $S$ and $\mathcal{A}$ be an alphabet
the same size as $S$. Given an injective map $\pi : \mathcal{A} \rightarrow S$, one may define the hash value of the message $x_1x_2 \dots x_k$ to be the group element $\pi(x_1)\pi(x_2) \ldots \pi(x_k)$.
\end{definition}

Typically, one is concerned only with binary messages, and therefore with two-generator Cayley hash functions. The common design principle behind the Cayley hash functions is the performance of a walk on a regular Cayley graph according to the bits of an input message, the last vertex giving the hash value. 

  \begin{definition}[Cayley Graph] Let $G$ be a group and $S$ a subset of elements of $G$. The Cayley graph of $C_{G,S}=(V,E)$ of $G$ with respect to $S$ is defined to have vertices $v_g$ corresponding to each element $g \in G$, and edges $(v_{g_1}, v_{g_2})\in E \iff \exists s\in S$ such that $g_2 =g_1 s$. Here, the set $S$ is the set of graph generators. \end{definition}

  The security of a Cayley hash function defined on a group $G$ is determined by the properties of the corresponding graph. More precisely, the difficulty of producing collisions for such a hash function depends on the difficulty of solving the \textit{factorization problem} in a certain finite non-abelian group.

    \begin{definition}[Factorization problem]
        Let $G$ be a group with generators $S=\{s_1, \ldots, s_k\}$ and $L>0$ be a fixed constant. Given $g\in G$, return $m_1,\ldots, m_L$ and $\ell \leq L$, with $m_i\in \{1, \ldots, k\}$ such that $\prod\limits_{i=1}^{\ell}s_{m_i}=g$. The factorization problem for $g=1$ fixed is called the representation problem. 
    \end{definition}


Apart from hash functions, the difficulty of factorization in finite matrix groups has been used to build some key exchange, encryption and authentication schemes \cite{ironwood, walnut,kayawood}.

Factorization in finite groups is related to the famous conjecture of Babai on the diameter of Cayley graphs, which states that diameter of any undirected Cayley graph of non-abelian simple group is polylogarithmic in group size. In other words, ``short" factoriations always exist for finite non-abelian groups, regardless of the choice of generating set. The conjecture is known to be true for certain groups like $SL_2(2,\fp), SL_2(2,\f_{2^k})$. However, existing generic proofs for any generating set are non-constructive. 
In \cite{origtz}, the following Cayley hash function was defined in the group $G=SL_2(\fp)$:

\begin{definition}[Z\'emor Hash Function]
    Choose the generators $A_0=\begin{pmatrix}
        1 & 1\\ 0 & 1
    \end{pmatrix}$, $A_1= \begin{pmatrix}
        1 & 0\\ 1 & 1
    \end{pmatrix}$ of $SL_2(\fp)$. For a message $m=m_1m_2 \ldots m_k \in \{0,1\}^*$ define $H(m_1\ldots m_k)=A_{m_1}\ldots A_{m_k}$.
\end{definition}
This hash function, given in \cite{origtz} was attacked for both collisions and preimages in \cite{Tillich1993GrouptheoreticHF} using the Euclidean algorithm. %
However, this attack is specific to the generators $A_0$ and $A_1$. In \cite{rubiks}, it is claimed that the system is potentially secure by replacing $A_0$ and $A_1$ by $A_0^2$ and $A_1^2$. We define the \emph{generalized Z\'emor hash function} as above, but using generators $A_0=\begin{pmatrix} 1 & \alpha
 \\ 0 & 1 \end{pmatrix}$ and $A_1=\begin{pmatrix} 1 & 0
 \\ \beta & 1 \end{pmatrix}$ over $\fpk$. While the generators $A_0$ and $A_1$ have multiplicative order $p$, and so one trivially has collisions of length $p$ with the empty word, we nevertheless also consider the case $k>1$ and study the difficulty of finding non-trivial collisions (i.e.\ collisions of non-empty words).

 The following variant of the above hash function was later proposed in \cite{origtz2}.
 
\begin{definition}[Tillich-Z\'emor Hash function]
     Let $n>0$ and $p(x)$ be an irreducible polynomial over $\F_2$. Write $K=\F_2[x]/p(x)$. Consider $A_0=\begin{pmatrix} x & 1 \\ 1 & 0
     \end{pmatrix}$ and $A_1=\begin{pmatrix} x & x+1 \\ 1 & 1
     \end{pmatrix}$, which are generators of $SL_2(K)$. For a message $m=m_1m_2 \ldots m_k \in \{0,1\}^*$ define $H(m_1\ldots m_k)=A_{m_1}\ldots A_{m_k} \pmod{p(x)} $. $K=\f_2[x]/\langle q(x)\rangle \cong F_{2^n}$.
 \end{definition}

Collisions for this hash function for these set of generators were found in \cite{grassl}. This attack uses the structure in hash values of palindromic messages, and a result of Mesirov-Sweet  \cite{Mesirov1987ContinuedFE} on the Euclidean algorithm for polynomials $x$ and $x+1$ in characteristic 2.  This attack was extended to recover preimages in \cite{preimgs}. Since current attacks do not allow controlling of the forms of collisions, and work specifically for the set of generators $A_0$ and $A_1$, the security is an open problem for general parameters. 

In \cite{towards}, the authors provide a new heuristic algorithm for factoring generic generator sets by reducing them to contain a ``trapdoor" matrix. Their algorithm is subexponential in time, memory, and factorization lengths.  This algorithm, however, does not yield a practical attack on the Tillich-Z\'emor design with generic generators for $n=160$. In fact, to the best of knowledge till date, the collision and preimage resistance are recovered by replacing $A_0$ and $A_1$ by $B_0=\begin{pmatrix} x^2 & 1 \\ 1 & 0
     \end{pmatrix}$ and $B_1=\begin{pmatrix} x+1 & 1 \\ 1 & 0
     \end{pmatrix}$ \cite{hardeasy}. 
We define the \emph{generalized Tillich-Z\'emor hash function} as above, but using generators $A_0=\begin{pmatrix}\alpha & 1 \\ 1 & 0 \end{pmatrix}$ and $A_1=\begin{pmatrix}\beta & 1 \\ 1 & 0 \end{pmatrix}$ where $\alpha, \beta \in \fpk$. 

 It has also been proposed to use LPS Ramanujan graphs \cite{lps} to construct Cayley hash functions. However, these have also been cryptanalyzed \cite{lpscryptanalysis}. Some other methods of cryptanalysis for Cayley hash functions are discussed in \cite{rubiks}.

\paragraph{Collisions from Triangular and Diagonal Matrices}
In \cite{hardeasy}, it is shown that if one can produce ``sufficiently many" messages whose images in the matrix groups are upper/lower triangular, then one can find collisions of the generalized Z\'emor and Tillich-Z\'emor hash functions. 

 \begin{proposition}\label{triang}
Let $n$ be such that discrete logarithms can be solved in $\F_{2^n}^*$. Let
$\mathcal{D}$, $\mathcal{T}^{up}$, $\mathcal{T}^{low}$, $\mathcal{L}^{v}$, $\mathcal{R}^{v}\subset SL_2(\F_{2^n})$ be the subgroups of diagonal, upper and lower triangular matrices and the subgroup of matrices with left or right eigenvector $v$. If an attacker can compute $N$ random elements $M_i$ of one of these subgroups together with bit sequences $m_i$ of length at most $L$ hashing to these matrices, then he can also find a message $m$ such that $H_{ZT}(m) = I$. The message $m$ has expected size smaller than $N L2^{n/N}$ in the diagonal case and smaller than $N L2^{1+n/N}$ in the other cases.
\end{proposition}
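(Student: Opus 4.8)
The plan is to exploit the subgroup structure of the targeted matrix sets and reduce the problem of producing the identity to a discrete logarithm computation. First I would observe that each of the listed sets $\mathcal{D}$, $\mathcal{T}^{up}$, $\mathcal{T}^{low}$, $\mathcal{L}^{v}$, $\mathcal{R}^{v}$ is a subgroup of $SL_2(\F_{2^n})$, and in each case the relevant abelian quotient (or the group itself, in the diagonal case) is essentially the multiplicative group $\F_{2^n}^*$ via the map sending a matrix to its top-left entry, or more precisely to its diagonalizable eigenvalue. So if $M_1, \dots, M_N$ are the sampled elements, written $M_i \mapsto \lambda_i \in \F_{2^n}^*$ under this homomorphism, then a product $M_{i_1}^{e_1}\cdots M_{i_N}^{e_N}$ lands in the kernel precisely when $\sum_j e_j \log \lambda_{i_j} \equiv 0 \pmod{2^n-1}$, and lands at the identity if additionally the kernel part (unipotent part) cancels.

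Next I would carry out the actual search in two stages. Stage one: reduce $N$ of the $\lambda_i$ modulo the subgroup they generate in $\F_{2^n}^*$; heuristically $N$ random elements generate a subgroup of index about $2^{n/N}$ after we allow short integer combinations, so by a baby-step/giant-step or birthday-type argument over the $N$ exponents one finds a nontrivial relation $\prod_i \lambda_i^{e_i} = 1$ with each $|e_i|$ bounded by roughly $2^{n/N}$ (this is where the stated bound $N L 2^{n/N}$ originates: $N$ generators, each contributing a word of length $\le L$, each used at most $\sim 2^{n/N}$ times). For the diagonal case this already yields $\prod M_i^{e_i} = I$ directly since $\mathcal{D}\cong \F_{2^n}^*$. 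Stage two, needed for the triangular and eigenvector cases: after killing the diagonal/eigenvalue part, the product lies in the unipotent radical, which is isomorphic to the additive group $(\F_{2^n}, +)$; conjugating or appending one more short relation (a reflection swapping the two triangular Borel subgroups, or multiplying by the analogous product built from the opposite-triangular samples) cancels this additive component as well, at the cost of a factor $2$ in the length — hence the bound $N L 2^{1+n/N}$ in the non-diagonal cases. Finally, for $\mathcal{L}^v$ and $\mathcal{R}^v$ one notes these are conjugates of $\mathcal{T}^{up}, \mathcal{T}^{low}$ by a fixed change of basis sending $v$ to a standard basis vector, so the same argument applies verbatim after conjugation.

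The main obstacle I expect is making the heuristic counting rigorous enough to justify the claimed \emph{expected} message size: one must argue that the $N$ sampled eigenvalues behave like independent uniform elements of $\F_{2^n}^*$ so that short multiplicative relations exist and can be found with the stated exponent bound, and separately that the unipotent/additive components do not conspire to require long corrections. Since the proposition is stated in terms of expected size and presumes an oracle producing ``random'' elements of the subgroups, I would treat the subgroup-generation and relation-finding as a standard generalized-birthday / lattice-reduction step and cite the discrete logarithm assumption in $\F_{2^n}^*$ for the final inversion, keeping the probabilistic analysis at the heuristic level consistent with \cite{hardeasy}. The remaining steps — verifying the subgroup isomorphisms, the Borel-to-additive quotient structure, and the conjugacy of the eigenvector subgroups — are routine linear algebra in $SL_2$.
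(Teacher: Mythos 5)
The paper never proves this proposition itself --- it is quoted from \cite{hardeasy} --- but the mechanism it relies on is spelled out at the start of Section~\ref{gen-zem}.3: from messages $z_i$ with triangular hashes having diagonal entries $a_i$, solve the representation problem $\prod_i a_i^{e_i}=1$ in $\F_{2^n}^*$ via discrete logarithms (exponents of expected size about $2^{n/N}$, giving a word $z=z_1^{e_1}\cdots z_N^{e_N}$ of length about $NL2^{n/N}$), so that $H(z)=\begin{pmatrix}1 & b\\ 0 & 1\end{pmatrix}$ is unipotent, and then use the fact that in characteristic $2$ this squares to the identity: $H(z\|z)=I$. Your ``stage one'' (the eigenvalue homomorphism onto $\F_{2^n}^*$, the discrete-log-based relation with exponents $\approx 2^{n/N}$, the immediate conclusion in the diagonal case, and the conjugacy reduction of $\mathcal{L}^v,\mathcal{R}^v$ to the triangular cases) matches this argument.

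The genuine gap is in your ``stage two.'' You propose to kill the leftover unipotent part by inserting a reflection that swaps the two Borel subgroups, or by multiplying with an analogous product built from opposite-triangular samples. Neither move is available under the hypothesis: the attacker has preimages of elements of \emph{one} of the listed subgroups only, and the only operation at his disposal is concatenation of known message--hash pairs, so he has no message hashing to such a reflection and no opposite-triangular samples to use. The correct step --- and the actual source of the extra factor $2$ in $NL2^{1+n/N}$ --- is the characteristic-$2$ identity $\begin{pmatrix}1 & b\\ 0 & 1\end{pmatrix}^2=I$, i.e.\ one simply repeats the whole message $z$ once. Your write-up never invokes characteristic $2$, which is precisely why this step came out both more complicated and unjustified; in odd characteristic the cancellation really would be problematic, whereas here it is immediate. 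With stage two replaced by this squaring argument, the rest of your outline is consistent with the cited proof.
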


In this same work, the authors use random probabilistic search to find pre-images of upper/lower triangular matrices, and subsequently uses these to produce collisions.  This approach works for any Cayley hash function, and for $SL_2$ groups over any finite field.

\section*{Summary of Contributions}
 
This paper focuses on devising methods for producing collisions in algebraic hash functions that may be seen as generalized forms of the well-known Z\'emor and Tillich-Z\'emor hash functions. In contrast to some of the previous approaches, we attempt to construct collisions in a structured and deterministic manner.

In Section~\ref{gen-zem}, we introduce a method for constructing messages with triangular or diagonal hashes messages. For this, we extend existing hash values in $SL_2(\fpk)$ into triangular or diagonal form by multiplying with products of the form $A_0^mA_1^n$, where $A_0=\begin{pmatrix} 1 & \alpha
 \\ 0 & 1 \end{pmatrix}$ and $A_1=\begin{pmatrix} 1 & 0
 \\ \beta & 1 \end{pmatrix}$ denote generators. More precisely, we consider the following problem.

\begin{problem}Given a matrix $C\in SL_2(\fpk)$ formed as product of $A_0$ and $A_1$, find the conditions under which there exist integers $m$ and $n$ (of size significantly smaller than $p^k$) such that $CA_0^m A_1^n$ is upper/lower triangular, or even diagonal. Compute $m$ and $n$ if they exist.
\end{problem}

 We also discuss the application of this method to produce collisions, and the feasibility and efficiency thereof. Our method thus provides an alternate deterministic approach to the method for finding triangular hashes in \cite{hardeasy}.

 In Section~\ref{gen-TZ}, we consider the generalized Tillich-Z\'emor hash functions over $\fpk$ for $p\neq 2$, relating the generator matrices to a polynomial recurrence relation, and derive a closed form for any arbitrary power of the generators. We then provide conditions for collisions, and a method to maliciously design the system so as to facilitate easy collisions, in terms of this polynomial recurrence relation.
 
On simplifying the general criteria, and through experiments, our general conclusion is that it is very difficult in practice to achieve the theoretical collision conditions efficiently, in both the generalized Z\'emor and the generalized Tillich-Z\'emor cases. Therefore, although the techniques are interesting theoretically, in practice the collision-resistance of the generalized Z\'emor functions is reinforced.

\section{Generalized Z\'emor Hash functions} \label{gen-zem}

\begin{definition}[Generalized Z\'emor hash functions] Consider the generators $A_0=\begin{pmatrix} 1 & \alpha
 \\ 0 & 1 \end{pmatrix}$ and $A_1=\begin{pmatrix} 1 & 0
 \\ \beta & 1 \end{pmatrix}$ in the group $SL_2(\fpk)$. For a message $m=m_1m_2 \ldots m_k \in \{0,1\}^*$ define the hash value $H(m_1\ldots m_k)=A_{m_1}\ldots A_{m_k}$.
\end{definition}

 The Z\'emor hash function proposed in \cite{origtz} is a special case of this, with $\alpha= 1=\beta$. In \cite{rubiks} it was suggested that security is preserved when $s_0^2$ and $s_1^2$ are used instead, these cases correspond to the values $\alpha=2=\beta$.
Thus, the generalized Z\'emor hash functions are so far secure.

Clearly, $A_0^m=\begin{pmatrix}1 & m\alpha \\ 0 & 1 \end{pmatrix}$ and $A_1^n=\begin{pmatrix}1 & 0 \\ n\beta & 1 \end{pmatrix}$ for any integers $m$ and $n$.
$A_0$ and $A_1$ have multiplicative orders $p$, so trivial collisions of length $p$ with the empty word always exist. Nevertheless, we study the general case of non-trivial collisions in the finite field $\fpk$.  In general, one is interested in finding collisions with a significantly smaller length than the orders, at most, say $\mathcal{O}(\sqrt{p})$.

\subsection{Euclidean Algorithm Attack for $\alpha= \beta=1$}
In this subsection, we describe the attack of \cite{Tillich1993GrouptheoreticHF}, which uses the Euclidean algorithm, and explain why it fails if one chooses $\alpha\neq 1$, $\beta\neq 1$ in the hash function design. 
Consider a matrix $X=\begin{psmallmatrix}a & b\\ c & d\end{psmallmatrix}\in SL_2(\fp)$ and suppose that we have found a non-identity matrix
  $Y=\begin{psmallmatrix}A & B\\ C & D\end{psmallmatrix}\in SL_2(\Z)$ that reduces modulo $p$ to $X$.
Let $A_0= \begin{psmallmatrix}
    1 & 1 \\0 & 1
\end{psmallmatrix}, \ A_1 = \begin{psmallmatrix}
    1 & 0 \\1 & 1 \end{psmallmatrix}$. Consider the general case with generators $\tilde{A_0}=A_0^\alpha=\begin{psmallmatrix}
     1 & \alpha \\ 0 & 1
 \end{psmallmatrix}$, $\tilde{A_1}=A_1^\beta=\begin{psmallmatrix}
     1 & 0 \\ \beta & 1
 \end{psmallmatrix}$.

We write \[Y=\begin{pmatrix}A & B \\ C & D\end{pmatrix}=\begin{pmatrix}a_i & b_i \\ c_i & d_i\end{pmatrix}A_0^{\alpha q_i}A_1^{\beta q_{i-1}}\ldots A_0^{\alpha q_2}A_1^{\beta q_1}\]
where $\begin{pmatrix}a_0 & b_0\\ c_0 & d_0\end{pmatrix}=\begin{pmatrix}A & B\\ C & D\end{pmatrix}$. Since the matrix $X$ is obtained as a product of $\tilde{A_0}$ and $\tilde{A_1}$, there exists an integer $n$ such that $\begin{pmatrix}a_n & b_n\\ c_n & d_n\end{pmatrix} = A_0^{\alpha p_n}$ or $\begin{pmatrix}a_n & b_n\\ c_n & d_n\end{pmatrix} = A_0^{\alpha p_n}=A_1^{\beta q_n}$. We wish to determine the values of the $q_i$, $1\leq i \leq n$ for such an $n$. We have, for $1\leq i\leq n$, \begin{align}\label{euc-eqns} 
a_{i}&=a_{i-1}-\beta b_{i-1}q_{i-1} \nonumber \\
b_{i}&=b_{i-1}-\alpha a_{i-1}q_{i} \nonumber \\
c_{i}& = c_{i-1}-\beta d_{i-1}q_{i-1} \nonumber \\
d_{i}&= d_{i-1}-\alpha c_{i-1}q_{i} 
\end{align} where $\begin{pmatrix}a_0 & b_0\\ c_0 & d_0\end{pmatrix}=\begin{pmatrix}A & B\\ C & D\end{pmatrix}$ are the only fixed values of the $a_{i}, b_i, c_i, d_i$'s  and $\alpha, \beta$ are treated as integers smaller than $p$. 


 \paragraph{Case $\alpha=\beta =1$} 
First assume that $\alpha=\beta=1$.  Since $AD-BC=1 $, $\gcd(A,B)=\gcd(C,D)=1$, and we have either $A>B$ or $D>C$. We consider the case $A>B$, the argument for other case is analogous. Let $q_1, \ldots, q_n$ be the quotients that appear when Euclidean algorithm is applied to $(A,B)$, i.e.\ \begin{align*}
    A&=Bq_1+r_1 \\
    B&=r_1q_2+r_2 \\
&\vdots \\
r_i &= r_{i+1}q_{i+2} + r_{i+2} \\
   & \vdots \\
r_{n-2}&=r_{n-1}q_n  =\gcd(A,B)q_n=q_n
\end{align*}

 Note that $Y A_1^{-q_1}A_0^{-q_2} = \begin{psmallmatrix}
    r_1 & r_2 \\C-Dq_1 & -q_2(C-Dq_1)+D
\end{psmallmatrix}$, and for $i$ odd, $Y A_1^{-q_1}A_0^{-q_0}\ldots A_1^{-q_i} = \begin{psmallmatrix}
    r_{i} & r_{i-1} \\ \star & \star
\end{psmallmatrix}$ and $Y A_1^{-q_1}A_0^{-q_0}\ldots A_1^{-q_i}A_0^{-q_{i+1}} = \begin{psmallmatrix}
    r_i & r_{i+1} \\ \star & \star
\end{psmallmatrix}$. If $n$ is even then $Y A_1^{-q_1}A_0^{-q_2}\ldots A_0^{-q_{n}}= \begin{psmallmatrix}
    1 & 0 \\ q & 1
\end{psmallmatrix}= A_1^{x}$, for some $q\in \Z$, ($q<C$) since $r_{n-1}=1$. In other words, we have $Y= A_0^{q_n}A_1^{q_{n-1}} \ldots A_0^{q_2}A_1^{q_{1}+q}$  where $q_1, \ldots, q_n, q$ are obtained when the Euclidean algorithm is applied to $(A,B)$. The practical application of this principle to produce collisions of reasonable length is demonstrated in \cite{Tillich1993GrouptheoreticHF}. This gives a polynomial time solution for collisions.

\paragraph{Case $\alpha\neq 1, \; \beta \neq 1$}
In the case with general $\alpha, \beta$, there is no clear protocol to solve the set of equations~\eqref{euc-eqns} for the $q_{i}$'s. For instance, we start with the assumption that $A>B$, but it may still happen that $A\leq \beta B$, and then there is no clear way to apply the Euclidean algorithm.



 \subsection{Extending messages for diagonal hashes over $\fp$}
In this subsection, we discuss the extension of messages to produce diagonal hashes in $SL_2(\fp)$. The below lemma demonstrates the production of a diagonal matrix starting with the hash of an arbitrary message.

For a matrix $C$, we denote by $C[i,j]$ the $(i,j)^{\text{th}}$ entry of $C$. Further, a message is denoted in its binary string representation $m=0^{n_1}1^{m_1}\ldots 0^{n_r}1^{m_r}$ where multiplication corresponds to concatenation, and for $b\in\{0,1\}$, the string $b^n$ denotesf the concatenation of $n$ consecutive $b$-bits. 

\begin{lemma}\label{diag-hash}
      Let $z$ be any message and $C:=H(z)\in SL_2(\fp)$ be its corresponding hash value. Assume that $a:=C[0,0]\neq 0$. Then there exist integers $m, n \in \{0,1,\ldots, p-1\}$ such that $C\cdot A_0^m\cdot A_1^n$ is a diagonal matrix and $(C\cdot A_0^m\cdot A_1^n)[0,0]=C[0,0]$.
\end{lemma}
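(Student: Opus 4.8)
The plan is to multiply out $C \cdot A_0^m \cdot A_1^n$ explicitly using the closed forms $A_0^m = \begin{psmallmatrix} 1 & m\alpha \\ 0 & 1\end{psmallmatrix}$ and $A_1^n = \begin{psmallmatrix} 1 & 0 \\ n\beta & 1\end{psmallmatrix}$ already recorded in the text, and then read off the conditions on $m,n$ that force the two off-diagonal entries to vanish. Writing $C = \begin{psmallmatrix} a & b \\ c & d\end{psmallmatrix}$ with $ad-bc=1$, I first compute $C A_0^m = \begin{psmallmatrix} a & am\alpha + b \\ c & cm\alpha + d\end{psmallmatrix}$, and then right-multiply by $A_1^n$ to get
\[
C A_0^m A_1^n = \begin{pmatrix} a + (am\alpha+b)n\beta & am\alpha + b \\ c + (cm\alpha+d)n\beta & cm\alpha+d \end{pmatrix}.
\]
So the matrix is diagonal precisely when the $(0,1)$ entry $am\alpha + b = 0$ and the $(1,0)$ entry $c + (cm\alpha+d)n\beta = 0$.

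**Solving for $m$ and $n$.** Since $a \neq 0$ by hypothesis and $\alpha \neq 0$ (it is a unit in $\fp$, otherwise $A_0$ would not be invertible), the first equation has the unique solution $m \equiv -b a^{-1}\alpha^{-1} \pmod p$, which lies in $\{0,\dots,p-1\}$. Substituting back, note that with this choice $cm\alpha + d = d - bca^{-1} = (ad-bc)a^{-1} = a^{-1}$, using $\det C = 1$; in particular $cm\alpha + d = a^{-1} \neq 0$. Hence the second equation becomes $c + a^{-1} n\beta = 0$, giving the unique solution $n \equiv -ac\beta^{-1} \pmod p$. With these $m,n$ the product is diagonal, and its $(0,0)$ entry is $a + (am\alpha + b)n\beta = a + 0 = a = C[0,0]$, as claimed (the $(1,1)$ entry is automatically $a^{-1}$, consistent with determinant $1$).

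**Remarks on obstacles.** There is essentially no hard step here: the whole argument is a two-line matrix multiplication followed by solving two linear equations, and the determinant identity $ad - bc = 1$ is exactly what guarantees the intermediate quantity $cm\alpha + d$ is nonzero so that the second equation is genuinely solvable for $n$. The only point deserving care is the implicit nondegeneracy facts — that $\alpha,\beta$ are invertible mod $p$ and that $a^{-1}$ is well-defined — but these are immediate over the field $\fp$. I would also note for later use (the lemma is stated over $\fp$ but the surrounding discussion moves to $\fpk$) that the argument is identical over $\fpk$: division by $a$, $\alpha$, $\beta$ is still valid, the only genuine change being that $m,n$ must then be interpreted as the reductions of integers, which is the subtlety flagged in the paper's Problem statement about keeping $m,n$ small.
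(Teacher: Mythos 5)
Your proof is correct and matches the paper's argument: both use the same exponents $m=-b(a\alpha)^{-1}$, $n=-ac\beta^{-1}$ and the determinant identity $ad-bc=1$ to kill the off-diagonal entries, the only cosmetic difference being that you derive these values by solving the two linear conditions while the paper states them outright and verifies. Your added observations (nonvanishing of $cm\alpha+d=a^{-1}$, invertibility of $\alpha,\beta$) are sound and consistent with the paper's computation.
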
 \begin{pf} Write $C=\begin{pmatrix}a & b \\ c & d \end{pmatrix}$. Set $m=-b/(a\alpha)$ and $n=-ac/\beta$. Then \begin{align*}
     C\cdot A_0^m\cdot A_1^n 
=\begin{pmatrix}a(1+mn\alpha\beta)+nb\beta & 0 \\ c(1+mn\alpha\beta) +nd\beta & mc\alpha+d \end{pmatrix}  \end{align*} 
 Now, $c(1+mn\alpha\beta) +nd\beta= c(1+bc)-acd = c-c(ad-bc)=0$, since $C\in SL_2(\fpk)$. 
 Further, $A(1+mn\alpha\beta)+nb\beta = a(1+bc)-abc=a$, and $mc\alpha+d = -bc/a+d=1/a$. Thus, $C\cdot A_0^m\cdot A_1^n = \begin{pmatrix}a & 0 \\ 0  & a^{-1} \end{pmatrix}$. \end{pf}

\subsubsection{Generating Collisions}
In \cite{hardeasy},  the authors describe a method which uses the representation of 1 and pre-images of multiple diagonal matrices to produce a collision with the hash $H()$ of the empty word. We show here that Lemma~\ref{diag-hash} can be used to generate collisions in an alternative fashion. Let $z$ be any message and $C:=H(z) \in SL_2(\fp)$ be its hashed value. Observe that for any integers $m$ and $n$, the matrix $D:= A_1^n \cdot C \cdot A_0^m $ satisfies $D[0,0]=C[0, 0]$. Pick arbitrary distinct integers $n_1, m_1, n_2, m_2$ and set $D_1:= A_1^{n_1} \cdot C \cdot A_0^{m_1} $and $D_2:= A_1^{n_2} \cdot C \cdot A_0^{m_2} $. Then clearly, $D_1[0,0]=C[0, 0]=D_2[0, 0]$. Now apply Lemma 1 and find integers $\tilde{m}_1, \tilde{m}_2, \tilde{n}_1, \tilde{n}_2$ such that $\tilde{D_1}:=D_1\cdot A_0^{\tilde{m}_1}\cdot A_1^{\tilde{n}_1}$ and $\tilde{D_2}=D_2\cdot A_0^{\tilde{m}_2}\cdot A_1^{\tilde{n}_2}$ are diagonal and satisfy $\tilde{D_1}[0, 0]=D_1[0, 0]=C[0, 0]$ and $\tilde{D_2}[0, 0]=D_2[0, 0]=C[0, 0]$. Since $\tilde{D_1}$ and $\tilde{D_2}$ are diagonal, have the same entry, and lie in $SL_2(\fp)$, they must be equal, i.e.\ $\tilde{D_1}=\tilde{D_2}$. Defining $z_1=1^{n_1} \cdot z \cdot 0^{m_1+\tilde{m}_1}\cdot 1^{\tilde{n_1}}$ and $z_2=1^{n_2} \cdot z \cdot 0^{m_2+\tilde{m}_2}\cdot 1^{\tilde{n_2}} $. Clearly, $z_1\neq z_2$ but $H(z_1)=H(z_2)$.

\begin{example} 
For $p=7919$, $\alpha=5698$, $\beta= 6497$, consider the message text
\begin{align*}
z=&0^{44}1^{41}0^{17}1^{49}0^{47}1^{17}0^{50}1^{31}0^{15}1^{10}0^{39}1^{12}0^{2}1^{0}0^{24}1^{41}0^{28}1^{23}0^{9}1^{0}0^{47}1^{23}0^{1}1^{30}0^{18}\\&1^{32}0^{24}1^{14}0^{0}1^{49}0^{19}1^{28}0^{24}1^{26}0^{26}1^{26}0^{11}1^{1}0^{17}1^{20}0^{38}1^{22}0^{12}1^{38}0^{8}1^{33}0^{39}1^{42}0^{47}1^{29}\\&0^{10}1^{41}0^{14}1^{45}0^{13}1^{40}0^{42}1^{13}0^{2}1^{6}0^{40}1^{31}0^{2}1^{27}0^{1}1^{7}0^{36}1^{19}0^{3}1^{25}0^{10}1^{27}0^{21}1^{2}0^{12}1^{23}\\&0^{36}1^{8}0^{25}1^{39}0^{36}1^{0}0^{19}1^{39}0^{37}1^{32}0^{14}1^{4}0^{3}1^{12}0^{16}1^{23}0^{49}1^{25}0^{23}1^{19}0^{46}1^{23}0^{36}1^{31}\end{align*}
We have, $H(z)= \begin{pmatrix}
4812 & 5537 \\
4987 & 1690
\end{pmatrix}$. Choose random numbers $m_1=18, n_1=30$, $m_2=35
$, $n_2=33$ and compute $\tilde{m_1}= 6208, \tilde{n_1}= 744$, $\tilde{m_2}= 6191$, $\tilde{n_2}=180$ using Lemma~\ref{diag-hash}. Then for $z_1=1^{30}\cdot z \cdot 0^{6226}1^{744}$ and $z_2=1^{33}\cdot z \cdot 0^{6226}1^{180}$ we have the collision $H(z_1)=H(z_2)=\begin{pmatrix}
4812  &  0\\
0 & 1542
\end{pmatrix}$.


\end{example}

It is clear that the above method provides a deterministic technique to produce arbitratrily many distinct non-trivial collisions of size $\mathcal{O}(p)$. 
For the collisions to be of practical length, we further require that the values of the exponents $m$ and $n$ are considerably small relative to $p$ so that the resulting messages are of reasonable length.

\paragraph{Experimental results}
We looked for such $m$ and $n$ by random computer search in the above examples, using randomly generated messages of reasonable length and checking for the condition $m=-b/(a\alpha)<\sqrt{p}$ and $n=-ac/\beta<\sqrt{p}$. Unfortunately, for larger values of $p$, experiments indicate a very low probability of finding such values. For $30-40$ digit primes, brute force could no longer find any such examples. 

The below proposition describes a more structured approach for extending messages of the form $T=A_0^rA_1^s$.

\begin{proposition}\label{prop-diag-hash}  Consider a bound $\delta$. If there exist integers $r<\delta$, $y>p-\delta$ such that $s:=(y^{-1}-r^{-1})(\alpha\beta)^{-1}<\delta$ (where inverses are taken mod $p$) and $s(1+rs\alpha\beta)>p-\delta$, then for the message $T=A_0^{r}A_1^s$ (of length $\leq 2\delta)$, there exist $m,n<\delta$ such that $H(T)A_0^m A_1^n$ is diagonal.
\end{proposition}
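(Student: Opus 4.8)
The plan is to compute $H(T) = H(A_0^r A_1^s) = A_0^r A_1^s$ explicitly, then apply Lemma~\ref{diag-hash} and track exactly when its exponents are small. First I would write
\[
A_0^r A_1^s = \begin{pmatrix} 1 & r\alpha \\ 0 & 1 \end{pmatrix}\begin{pmatrix} 1 & 0 \\ s\beta & 1 \end{pmatrix} = \begin{pmatrix} 1 + rs\alpha\beta & r\alpha \\ s\beta & 1 \end{pmatrix},
\]
so that in the notation of Lemma~\ref{diag-hash} we have $a = 1 + rs\alpha\beta$, $b = r\alpha$, $c = s\beta$, $d = 1$. The hypothesis $a = 1 + rs\alpha\beta \neq 0$ (needed to invoke the lemma) should follow, or be added as an implicit assumption, from the size conditions; I would note it explicitly. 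Lemma~\ref{diag-hash} then guarantees $m = -b/(a\alpha) = -r/a$ and $n = -ac/\beta = -as$ (both mod $p$) make $H(T)A_0^m A_1^n$ diagonal with $(0,0)$-entry $a$.

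The second step is to show $m < \delta$ and $n < \delta$ under the stated hypotheses. For $n = -as \bmod p$: we want $-as \equiv$ something small. The condition $s(1 + rs\alpha\beta) > p - \delta$ is exactly the statement that $sa \bmod p$ lies in $(p-\delta, p)$, hence $-sa \bmod p = p - (sa \bmod p) \in (0,\delta)$, i.e. $n < \delta$. For $m = -r/a \bmod p = -r a^{-1} \bmod p$: here I would use the given relation. Note $a^{-1} = (1 + rs\alpha\beta)^{-1}$, and the hypothesis involves $y$ with $y > p - \delta$ and $s = (y^{-1} - r^{-1})(\alpha\beta)^{-1}$. Rearranging, $s\alpha\beta = y^{-1} - r^{-1}$, so $y^{-1} = r^{-1} + s\alpha\beta = (1 + rs\alpha\beta)/r = a/r$, whence $a^{-1} = y^{-1} r \cdot \ldots$ — more precisely $a/r = y^{-1}$ gives $r/a = y$, so $m = -r/a \bmod p = -y \bmod p = p - y \in (0, \delta)$ since $y > p - \delta$. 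So both exponents are genuinely below $\delta$, and the message $T = 0^r 1^s$ has length $r + s < 2\delta$.

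The main obstacle is bookkeeping the inverses mod $p$ consistently and justifying the interval arithmetic: one must be careful that "$>p-\delta$" and "$<\delta$" are statements about the canonical residue in $\{0,1,\dots,p-1\}$, that $r, s, y$ are all nonzero mod $p$ so the inverses exist, and that $a = 1 + rs\alpha\beta$ is invertible — which is forced once we show $r/a = y$ with $y$ a unit. I would also double-check the algebraic identity $y^{-1} = a/r$ is equivalent to the hypothesis $s = (y^{-1} - r^{-1})(\alpha\beta)^{-1}$, since this is the linchpin connecting the auxiliary variable $y$ to the exponent $m$. The rest is a direct substitution into Lemma~\ref{diag-hash} and reading off the two inequalities, so no further machinery is needed.
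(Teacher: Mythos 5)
Your proposal is correct and follows essentially the same route as the paper: compute $H(T)=\begin{pmatrix}1+rs\alpha\beta & r\alpha\\ s\beta & 1\end{pmatrix}$, read off $m=-r(1+rs\alpha\beta)^{-1}$ and $n=-s(1+rs\alpha\beta)$ from Lemma~\ref{diag-hash}, and use the relation $s=(y^{-1}-r^{-1})(\alpha\beta)^{-1}$ (equivalently $y=r(1+rs\alpha\beta)^{-1}$) together with the two interval hypotheses to get $m=p-y<\delta$ and $n=p-s(1+rs\alpha\beta)<\delta$. Your extra remarks on the invertibility of $a=1+rs\alpha\beta$ and on interpreting the inequalities via canonical residues are sensible points of care that the paper leaves implicit, but they do not change the argument.
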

\begin{pf}
We have $H(T)=\begin{pmatrix}        1+rs\alpha\beta & r\alpha \\ s\beta & 1    \end{pmatrix}$. As per Lemma~\ref{diag-hash}, the condition for $H(T)A_0^m A_1^n$ diagonal is that $m=-r/(1+rs(\alpha\beta))$ and $n=-s(1+rs(\alpha\beta))$. Write $x=(1+rs(\alpha\beta))$. For $m,n<\delta$ we require $rx^{-1}>p-\delta$, and $sx>p-\delta$. We set $y:=rx^{-1}$, from this we derive $s=(y^{-1}-r^{-1})(\alpha\beta)^{-1}$. Clearly, if $y>p-\delta$ and $s(1+rs\alpha\beta)>p-\delta$ then $m=p-y<\delta$ and $n=p-sx<\delta$, and by assumption, $r,s<\delta$, so the messages $T=A_0^rA_1^s$ and $A_0^mA_1^n$ each have length $\leq 2\delta$ and $H(T)A_0^mA_1^n$ is a diagonal matrix. 
\end{pf}

This clearly yields an algorithm with time complexity $\mathcal{O}(\delta^2)$ to test if for given values of $\alpha$, $\beta$, a message of the form   can be extended to have a diagonal hash. Indeed, one needs only to test all integers $r<\delta$, $y>p-\delta$ for the two conditions given in Proposition~\ref{prop-diag-hash}.

\subsection{Extending messages for triangular hashes over $\fpk$}
We now turn our attention to the more general case of producing upper or lower triangular hashes. In this subsection, we will consider the general case of $k\geq 1$ for the generalized Z\'emor hash functions over $\fpk$, and study the feasibility of efficiently producing messages hashing to upper triangular messages. 

Transforming the hash values of a message into an upper 
 or lower triangular matrix leads to producing collisions due to the following observation, which was explained in Propositon~3 of \cite{hardeasy}. Suppose that one can produce $r$ distinct messages $z_i$ such that $H(z_i)=\begin{pmatrix}a_i & b_i \\ 0 & d_i \end{pmatrix}$ is upper triangular. Also let $e_i$ be exponents such that $\prod\limits_{i=1}^r a_i^{e_i} = 1$. In other words, $(e_1, \ldots, e_r)$ is a solution to the representation problem in $\mathbb{F}_{p^k}^*$, which in turn reduces to a discrete log problem, since the multiplicative group of a finite field is cyclic. Then, for $z=z_1\ldots z_r $ we have $H(z)=\begin{pmatrix}1 & b \\ 0 & 1 \end{pmatrix}$ and thus $H(z||z) = \begin{pmatrix}1 & 0 \\ 0 & 1 \end{pmatrix}$ which collides with the hash value of the empty word. 

 We begin by proving some preliminary results below.

\begin{lemma} Suppose that the product $\alpha\cdot \beta\in \fp$ and that $C=\begin{pmatrix}a & b \\ c & d\end{pmatrix}$ is an arbitrary product of finitely many copies of $A_0$ and $A_1$. Then, $c/\beta \in \fp$ and $d \in \fp$.\end{lemma}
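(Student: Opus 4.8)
The plan is to proceed by induction on the number of generator factors in the product $C$. The base case is $C = I$, where $c = 0$ and $d = 1$, so trivially $c/\beta = 0 \in \fp$ and $d = 1 \in \fp$. For the inductive step, I would assume the claim holds for some product $C = \begin{pmatrix} a & b \\ c & d \end{pmatrix}$ with $c/\beta \in \fp$ and $d \in \fp$, and then examine the two cases $C A_0$ and $C A_1$ separately.

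In the first case, $C A_0 = \begin{pmatrix} a & a\alpha + b \\ c & c\alpha + d \end{pmatrix}$. The bottom-left entry is unchanged, so $c/\beta \in \fp$ still holds; the new bottom-right entry is $c\alpha + d = (c/\beta)(\alpha\beta) + d$, which lies in $\fp$ since $c/\beta \in \fp$ by hypothesis, $\alpha\beta \in \fp$ by assumption, and $d \in \fp$ by hypothesis. In the second case, $C A_1 = \begin{pmatrix} a + b\beta & b \\ c + d\beta & d \end{pmatrix}$. The bottom-right entry $d$ is unchanged, so $d \in \fp$ still holds; the new bottom-left entry divided by $\beta$ is $(c + d\beta)/\beta = c/\beta + d$, which lies in $\fp$ since $c/\beta \in \fp$ by hypothesis and $d \in \fp$ by hypothesis. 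This closes the induction.

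I do not anticipate a genuine obstacle here; the only subtlety is organizational, namely making sure the inductive hypothesis is stated for the right pair of quantities ($c/\beta$ and $d$, not $c$ and $d$ individually) so that multiplication on the right by either generator preserves exactly that pair. The key structural fact being exploited is that right-multiplication by $A_0$ fixes the first column of $C$ and right-multiplication by $A_1$ fixes the second column, so at each step only one of the two tracked quantities changes, and the update rule for that one quantity stays inside $\fp$ precisely because $\alpha\beta \in \fp$ bridges the two columns. One should also note that $a$ and $b$ themselves need not lie in $\fp$ in general, so the lemma is genuinely about the bottom row only.
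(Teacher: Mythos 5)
Your proof is correct, but it takes a genuinely different route from the paper's. You induct one generator factor at a time and close the induction purely by the explicit update rules: right-multiplication by $A_0$ fixes the first column and sends $d \mapsto (c/\beta)(\alpha\beta) + d$, while right-multiplication by $A_1$ fixes the second column and sends $c/\beta \mapsto c/\beta + d$, so the tracked pair $(c/\beta,\, d)$ stays in $\fp$ simply because $\fp$ is closed under these operations and $\alpha\beta \in \fp$. The paper instead writes $C$ as a product of blocks $A_0^{m_i}A_1^{n_i}$, inducts on the number of blocks, and verifies membership in $\fp$ via the Frobenius criterion $x \in \fp \iff x^p = x$: it raises $c_{r+1}/\beta$ and $d_{r+1}$ to the $p$-th power and simplifies, using identities such as $(\alpha\beta)^{p-1} = 1$ and $c_r^{p-1} = \beta^{p-1}$ (the latter coming from $(c_r/\beta)^{p-1}=1$). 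Your argument is more elementary and also cleaner at the margins: the paper's manipulation of $d_{r+1}^p$ implicitly assumes the relevant elements (e.g.\ $c_r$, $\alpha\beta$) are nonzero, cases your linear update rules handle without comment, and you need nothing beyond $\beta \neq 0$. What the paper's Frobenius-style computation buys is consistency with the technique used later in the section (the conditions $\gamma^p = \gamma$ and the derivation of $m$ in the non-$\fp$ case), where a one-step closure argument is no longer available; but for this lemma your direct induction is the lighter and arguably preferable proof.
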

\begin{pf} Write $C=A_0^{m_1}A_1^{n_1}\ldots A_0^{m_r}A_1^{n_r}:=\begin{pmatrix} a_r & b_r \\
c_r & d_r\end{pmatrix}$. The proof proceeds by induction on $r$. For the case $r=1$, we have $C = \begin{pmatrix} 1+mn\alpha\beta & m\alpha \\
n\beta & 1\end{pmatrix}$. Clearly, $c_1/\beta = n_1 \in \fp$ and $d=1\in \fp$, so the result holds. 

Now assume that the result holds for $r\geq 1$. Write $m:=m_{r+1}$ and $n:=n_{r+1}$. We  have \[C_{r+1} = \begin{pmatrix} a_r & b_r \\
c_r & d_r\end{pmatrix}\cdot \begin{pmatrix} 1+mn\alpha\beta & m\alpha \\
n\beta & 1\end{pmatrix} = \begin{pmatrix} a_r(1+mn\alpha\beta) + nb_r\beta & a_rm\alpha +b_r\\
c_r(1+mn\alpha\beta) + nd_r\beta & c_rm\alpha +d_r\end{pmatrix}\] Now, \begin{align*}c_{r+1}^p/\beta^p =& 1/\beta^p[c_r^p(1+mn\alpha\beta)^p+nd_r^p\beta^p] \\
=& 1/\beta^p [c_r^p(1+mn\alpha\beta) + nd_r^p\beta^p] \\ =& (c_r/\beta)^p(1+mn\alpha\beta)+ nd_r^p.\end{align*} By the induction hypothesis we have $d_r^p=d_r\in \fp$ and $(c_r/\beta)^p=c_r/\beta \in \fp$, and so $c_{r+1}^p/\beta^p = (c_r/\beta)(1+mn\alpha\beta)+ nd_r = c_{r+1}/\beta \in \fp$. Further, \begin{align*} d_{r+1}^p = (c_rm\alpha+d_r)^p =& m c_r^p \alpha^p + d_r^p =m c_r^p \alpha^p + d_r = m c_r \cdot c_r^{p-1} \alpha^p + d_r \\ =& m c_r \cdot \beta^{p-1} \alpha^p + d_r \\ =&  m c_r \cdot (\alpha\beta)^{p-1} \alpha + d_r \\ =& mc_r\alpha+d_r = d_{r+1} \end{align*}
Hence, we have $c_{r+1}/\beta \in \fp$ and $d_{r+1}\in \fp$. By induction, the statement holds for all $r\geq 1$ and hence the lemma holds. 
\end{pf}

\begin{lemma}\label{triang-hash} Let $k\geq 1$ and $\alpha\cdot \beta\in \fp$. Let $z$ be any message and $C:=H(z)$ be its corresponding hash value. Assume that $a:=C[0, 0]\neq 0$. Then, there exist integers $m, n \in \{0,1,\ldots, p-1\}$ such that $C\cdot A_0^m\cdot A_1^n$ is upper triangular.
\end{lemma}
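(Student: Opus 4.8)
The plan is to repeat, over the prime field, the computation behind Lemma~\ref{diag-hash}, but to demand only that the bottom-left entry vanish. Writing $C=\begin{psmallmatrix}a & b\\ c & d\end{psmallmatrix}$ and $A_0^m=\begin{psmallmatrix}1 & m\alpha\\ 0 & 1\end{psmallmatrix}$, $A_1^n=\begin{psmallmatrix}1 & 0\\ n\beta & 1\end{psmallmatrix}$, a direct multiplication gives
\[
C\cdot A_0^m\cdot A_1^n=\begin{pmatrix} a+n\beta(am\alpha+b) & am\alpha+b\\ c(1+mn\alpha\beta)+dn\beta & cm\alpha+d\end{pmatrix},
\]
so $C A_0^m A_1^n$ is upper triangular exactly when $c(1+mn\alpha\beta)+dn\beta=0$. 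The whole point is that $m$ and $n$ must be integers reduced modulo $p$, i.e.\ elements of the prime field $\fp$ rather than of $\fpk$; the exponent choice used in Lemma~\ref{diag-hash} generally lies in $\fpk$ and is therefore not admissible here.

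This is precisely where the hypothesis $\alpha\beta\in\fp$ and the previous lemma enter. By that lemma $c/\beta\in\fp$ and $d\in\fp$, while $\alpha\beta\in\fp^{*}$ since $\alpha,\beta\neq 0$. Dividing the triangularity condition through by $\beta$ turns it into
\[
(c/\beta)(1+mn\alpha\beta)+dn=0,
\]
a single polynomial equation, now with all coefficients in $\fp$, in the two unknowns $m,n$. Hence it suffices to produce one solution $(m,n)\in\fp\times\fp$ and then represent $m,n$ by integers in $\{0,1,\dots,p-1\}$.

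Exhibiting a solution is then elementary and splits into a few cases. If $c=0$ then $C$ is already upper triangular and $m=n=0$ works. If $c\neq 0$, so that $c/\beta\neq 0$: taking $m=0$ the equation reduces to $(c/\beta)+dn=0$, solved by $n=-(c/\beta)/d\in\fp$ whenever $d\neq 0$; and if $d=0$ then $bc=-1$ (consistent with $c\neq 0$) and one takes $m=1$, $n=-(\alpha\beta)^{-1}\in\fp$, for which $1+mn\alpha\beta=0$. In every case $(m,n)\in\fp\times\fp$, as required.

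The step I would be most careful about is exactly the one singled out above: ensuring $m$ and $n$ stay inside the prime field. Once the combination of $\alpha\beta\in\fp$ with the previous lemma is used to rewrite the triangularity condition as an equation over $\fp$, the remaining case analysis is routine and there is no genuine obstruction. I would also remark that the hypothesis $a=C[0,0]\neq 0$ is not actually used to obtain upper-triangularity; it would be needed only if one additionally wished to preserve the $(0,0)$-entry, as in Lemma~\ref{diag-hash}, so here it can be regarded as kept merely for parallelism with the diagonal case.
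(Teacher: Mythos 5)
Your proof is correct and takes essentially the same route as the paper's: both compute $C\,A_0^m A_1^n$, set the lower-left entry to zero, and use the preceding lemma ($c/\beta\in\fp$, $d\in\fp$) together with $\alpha\beta\in\fp$ to guarantee that the solving exponent $n$ lies in $\fp$; the paper just picks any $m$ with $mc\alpha+d\neq 0$ and sets $n=-c/\bigl(\beta(mc\alpha+d)\bigr)$, which subsumes your case split ($m=0$ when $d\neq 0$, $m=1$ when $d=0$). Your observation that the hypothesis $a\neq 0$ is never actually used is consistent with the paper's proof, which likewise does not invoke it.
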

\begin{pf}
     Write $C=\begin{pmatrix}a & b \\ c & d \end{pmatrix}$. For integers $m$ and $n$, we have \begin{align*}
     C\cdot A_0^m\cdot A_1^n=&\begin{pmatrix}a & b \\ c & d \end{pmatrix}\begin{pmatrix}1+mn\alpha\beta & m\alpha \\ n\beta & 1 \end{pmatrix} \\
     &= \begin{pmatrix}a(1+mn\alpha\beta)+nb\beta & ma\alpha+b \\ c(1+mn\alpha\beta) +nd\beta & mc\alpha+d \end{pmatrix}
 \end{align*} 
Choose $m$ to be any integer such that $mc\alpha+d \neq 0$. Since $\alpha\beta\in \fp$, $c/\beta \in \fp$ and $d\in \fp$, we also have $c\alpha=\left(c/\beta\right)\cdot \alpha\cdot \beta \in \fp$, and we can define \begin{equation}\label{nval} n:=-\dfrac{c}{\beta(mc\alpha+d)} \in \fp\end{equation} This gives $C\cdot A_0^m \cdot A_1^n= \begin{pmatrix}a(1+mn\alpha\beta)+nb\beta & ma\alpha+b \\ 0 & mc\alpha+d \end{pmatrix}$.
\end{pf}


\begin{corollary}\label{cor1}
Let $k\geq 1$ and $\alpha\cdot \beta\in \fp$.    Let $\delta$ be a bound. Let $C=\begin{pmatrix} a & b \\ c & d \end{pmatrix}$ be the hash of an arbitrary message and $m\in \{0,1,\ldots, p-1\}$ be an integer such that both $m$ and $n=-c/(\beta(mc\alpha+d))\in \fp$ are smaller than $\delta$. Then $CA_0^mA_1^n$ has length at most $2\delta$ more than $C$ and is upper triangular.
\end{corollary}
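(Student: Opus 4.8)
The plan is to derive this directly from Lemma~\ref{triang-hash} together with a short bookkeeping argument on message lengths, since the corollary is essentially that lemma with an explicit length bound attached. First I would check that the hypotheses of Lemma~\ref{triang-hash} are all in force: $k\geq 1$ and $\alpha\beta\in\fp$ are assumed, $C$ is the hash of some message $z$, and the very requirement that $n=-c/(\beta(mc\alpha+d))$ be a well-defined element of $\fp$ already encodes both that $mc\alpha+d\neq 0$ (so that $m$ is an admissible choice in the sense of the lemma) and that $c\alpha=(c/\beta)\,\alpha\beta\in\fp$ (so that the defining formula for $n$ makes sense). We may then view this $n$ as an integer in $\{0,1,\ldots,p-1\}$.

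Next I would invoke Lemma~\ref{triang-hash}, or equivalently reuse the explicit product computed in its proof,
\[
C\cdot A_0^m\cdot A_1^n=\begin{pmatrix} a(1+mn\alpha\beta)+nb\beta & ma\alpha+b \\ c(1+mn\alpha\beta)+nd\beta & mc\alpha+d\end{pmatrix},
\]
and observe that the lower-left entry equals $c+n\beta(mc\alpha+d)=c-c=0$ by the choice of $n$. Hence $CA_0^mA_1^n$ is upper triangular, which is the second assertion of the corollary.

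For the length assertion, I would use that $A_0^m=H(0^m)$ and $A_1^n=H(1^n)$, so that if $z$ is a message with $H(z)=C$, then $z\cdot 0^m\cdot 1^n$ is a message of length $|z|+m+n$ whose hash is $CA_0^mA_1^n$. Since $m<\delta$ and $n<\delta$ by hypothesis, $m+n<2\delta$, so the extended message is longer than $z$ by at most $2\delta$, as claimed.

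I do not anticipate a genuine obstacle here: the substance is entirely contained in Lemma~\ref{triang-hash}, and the only points needing care are (i) making explicit that both the admissibility of the pair $(m,n)$ and the well-definedness of $n$ over $\fp$ rest on the standing assumption $\alpha\beta\in\fp$, and (ii) reading ``length at most $2\delta$ more than $C$'' as a statement about the underlying message $z$ rather than about the matrix $C$ itself.
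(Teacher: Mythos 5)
Your argument is correct and is exactly the intended one: the paper gives no separate proof of Corollary~\ref{cor1}, treating it as an immediate consequence of Lemma~\ref{triang-hash}, and your derivation (the choice of $n$ killing the lower-left entry, plus noting $A_0^m=H(0^m)$, $A_1^n=H(1^n)$ so the extended message $z\cdot 0^m\cdot 1^n$ gains length $m+n<2\delta$) is precisely that reasoning made explicit. Your two points of care, the role of $\alpha\beta\in\fp$ and reading the length bound as a statement about the underlying message, match the paper's framing.
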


\paragraph{Experimental results} Again, we looked for such $m$ and $n$ by random computer search in the above examples, using randomly generated messages of reasonable length and checking for the condition above. Once again, for larger values of $p$, experiments indicate a very low probability of finding such values. For $30-40$ digit primes, brute force could no longer find any such examples. 

Below, we explore a more structured approach for messages of the form $T=A_0^rA_1^s$.

\begin{lemma} Let $k\geq 1$ and $\alpha\cdot \beta\in \fp$. Consider a bound $\delta$. If there exist integers $m,s<\delta$ such that $s(ms\alpha\beta+1)^{-1}>p-\delta$, then for the message $T=A_0^{r}A_1^s$ (of length $\leq 2\delta$, here $r$ can be chosen freely), for $n=p-s(ms\alpha\beta+1)^{-1}<\delta$, $H(T)A_0^m A_1^n$ is upper triangular.
\end{lemma}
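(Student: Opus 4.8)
The plan is to specialize Lemma~\ref{triang-hash} to the particular message shape $T=A_0^rA_1^s$ and then convert the abstract upper-triangularity condition $n=-c/(\beta(mc\alpha+d))$ from that lemma into the displayed inequality on $s(ms\alpha\beta+1)^{-1}$, while bookkeeping word lengths.

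First I would compute $H(T)$ explicitly. Using the forms $A_0^r=\begin{psmallmatrix}1 & r\alpha\\ 0 & 1\end{psmallmatrix}$ and $A_1^s=\begin{psmallmatrix}1 & 0\\ s\beta & 1\end{psmallmatrix}$ recorded at the start of the section, multiplication gives $H(T)=\begin{psmallmatrix}1+rs\alpha\beta & r\alpha\\ s\beta & 1\end{psmallmatrix}$. Thus in the notation of Lemma~\ref{triang-hash} we are in the case $c=s\beta$, $d=1$. Invoking Lemma~\ref{triang-hash} (whose hypotheses $k\geq 1$, $\alpha\beta\in\fp$ are assumed here), for any integer $m$ with $mc\alpha+d\neq 0$ the matrix $H(T)A_0^mA_1^n$ is upper triangular precisely when $n=-c/(\beta(mc\alpha+d))$. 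Substituting $c=s\beta$, $d=1$ collapses this to $n=-s/(ms\alpha\beta+1)=-s(ms\alpha\beta+1)^{-1}$, the denominator $ms\alpha\beta+1\neq 0$ being exactly what the hypothesis of the present lemma lets us assume so that the inverse exists. At this point I would note that because $\alpha\beta\in\fp$ and $m,s$ are integers, the quantity $ms\alpha\beta+1$ lies in the prime field $\fp$, hence so does $n$; therefore $n$ is a genuine residue in $\{0,1,\ldots,p-1\}$ and may legitimately serve as an exponent of $A_1$.

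Next comes the inequality bookkeeping. Writing $n$ for its representative in $\{0,\ldots,p-1\}$, we get $n=p-s(ms\alpha\beta+1)^{-1}$, the representative being nonzero because the hypothesis forces $s(ms\alpha\beta+1)^{-1}>p-\delta>0$. The assumption $s(ms\alpha\beta+1)^{-1}>p-\delta$ then yields $n=p-s(ms\alpha\beta+1)^{-1}<\delta$ immediately. Combined with $m<\delta$ and $s<\delta$ from the hypotheses, and choosing the free exponent $r<\delta$ (upper-triangularity is independent of $r$, which is why it may be chosen freely), the word $T=A_0^rA_1^s$ has length $r+s\leq 2\delta$ and the appended factor $A_0^mA_1^n$ has length $m+n\leq 2\delta$. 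I would close by substituting $n$ back into the formula from Lemma~\ref{triang-hash}, exhibiting $H(T)A_0^mA_1^n=\begin{psmallmatrix}(1+rs\alpha\beta)(1+mn\alpha\beta)+nr\alpha\beta & \star\\ 0 & mc\alpha+d\end{psmallmatrix}$ as the desired upper triangular matrix.

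I do not expect a genuine obstacle: the statement is essentially a repackaging of Lemma~\ref{triang-hash} for the message shape $A_0^rA_1^s$. The only two points that need care are (i) checking that the $\fp$-rationality of $\alpha\beta$ forces $n$ to be an honest integer exponent rather than merely a formal element of $\fpk$, and (ii) the passage to the mod-$p$ representative that turns the displayed inequality into $n<\delta$ and the concomitant length count; both are routine.
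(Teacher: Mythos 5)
Your proof is correct and takes essentially the same route as the paper: compute $H(T)=\begin{pmatrix}1+rs\alpha\beta & r\alpha\\ s\beta & 1\end{pmatrix}$, impose vanishing of the lower-left entry of $H(T)A_0^mA_1^n$ to get $n=-s(1+ms\alpha\beta)^{-1}$, and use the hypothesis to bound the representative $n=p-s(ms\alpha\beta+1)^{-1}<\delta$. The only (cosmetic) difference is that you obtain the condition on $n$ by specializing Lemma~\ref{triang-hash} with $c=s\beta$, $d=1$ rather than redoing the matrix product, and you spell out the mod-$p$ and length bookkeeping that the paper compresses into ``the result is now clear.''
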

\begin{pf}
We have $H(T)=\begin{pmatrix}        1+rs\alpha\beta & r\alpha \\ s\beta & 1    \end{pmatrix}$. Let $m,s<\delta$ be integers such that $s(ms\alpha\beta+1)^{-1}>p-\delta$.  The condition for \small $H(T)A_0^m A_1^n=\begin{pmatrix}1+rs\alpha\beta & r\alpha \\ s\beta & 1 \end{pmatrix}\begin{pmatrix}1+mn\alpha\beta & m\alpha \\ n\beta & 1 \end{pmatrix}=\begin{pmatrix}(1+rs\alpha\beta)(1+mn\alpha\beta) +rn\alpha\beta& r\alpha \\ s\beta(1+mn\alpha\beta)+n\beta & 1+ms\alpha\beta \end{pmatrix}$ \normalsize being upper triangular is that $n=-s(1+ms(\alpha\beta))^{-1}$. The result is now clear.
\end{pf}




This clearly yields an algorithm with time complexity $\mathcal{O}(\delta^2)$ to test if for given values of $\alpha$, $\beta$, a message of the form $T=A_0^rA_1^s$ can be extended to one with an upper triangular hash, for a fixed bound $\delta$. Indeed, one needs only to test all integers $m,s<\delta$ for the two conditions given above.

Clearly, for $C\cdot A_0^m \cdot A_1^n$ to be upper triangular, $n$ needs to assume the value in equation~\eqref{nval}. Below, we derive the general condition (without the assumption $\alpha\cdot\beta\in \fp)$ on $m$ so that both the $m$ and $n$ lie in $\fp$.

\begin{proposition}If $\alpha\cdot\beta\not\in\fp$, then $C \cdot A_0^m \cdot A_1^n$ is upper triangular for $m,n\in \fp$ if and only if for \begin{equation} \label{gammaval}\gamma=\left(\dfrac{d ((d\beta)^{p-1}-c^{p-1})}{\alpha c^p(1-(\alpha\beta)^{p-1})}\right),\end{equation} we have $\gamma^p=\gamma$, and $m=\gamma; \ n=\dfrac{-c}{\beta(mc\alpha+d)}$.
\end{proposition}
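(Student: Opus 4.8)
The plan is to turn the upper-triangularity requirement into a single scalar equation, solve that equation for $n$ as a function of $m$, and then use the Frobenius endomorphism $x\mapsto x^p$ (which fixes exactly the elements of $\fp$) to read off the condition on $m$ that makes both exponents lie in the prime field. Throughout I assume $c\neq 0$; if $c=0$ then $C$ is already upper triangular and $\gamma$ is not even defined, so this case is vacuous for the statement.

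First I would write $C\cdot A_0^m\cdot A_1^n$ exactly as in the proof of Lemma~\ref{triang-hash}:
\[
C\cdot A_0^m\cdot A_1^n=\begin{pmatrix} a(1+mn\alpha\beta)+nb\beta & ma\alpha+b \\ c(1+mn\alpha\beta)+nd\beta & mc\alpha+d \end{pmatrix}.
\]
Upper triangularity is equivalent to the vanishing of the $(2,1)$ entry, which rearranges to $c+n\beta(mc\alpha+d)=0$. Since $c\neq 0$ this already forces $mc\alpha+d\neq 0$, and hence $n$ is uniquely determined by $m$ via $n=-c/\bigl(\beta(mc\alpha+d)\bigr)$, recovering \eqref{nval}. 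At this point only the single parameter $m$ remains free, and the problem reduces to deciding for which $m$ both $m$ and the induced $n$ lie in $\fp$.

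Now I would impose $m\in\fp$ and $n\in\fp$, using that $x\in\fp$ if and only if $x^p=x$. Assuming $m\in\fp$, additivity of Frobenius gives $(mc\alpha+d)^p=mc^p\alpha^p+d^p$, so $n^p=-c^p/\bigl(\beta^p(mc^p\alpha^p+d^p)\bigr)$. Setting $n^p=n$, clearing the (nonzero) denominators, and collecting the terms that contain $m$ yields the linear equation
\[
m\,c^{p+1}\alpha\beta\bigl(1-(\alpha\beta)^{p-1}\bigr)=c\beta^p d^p-c^p\beta d .
\]
The only manipulations here that are not purely mechanical are the factorisations $\alpha\beta-(\alpha\beta)^p=\alpha\beta\bigl(1-(\alpha\beta)^{p-1}\bigr)$ and $c\beta^p d^p-c^p\beta d=c\beta d\bigl((d\beta)^{p-1}-c^{p-1}\bigr)$. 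Because $\alpha\beta\notin\fp$ we have $(\alpha\beta)^{p-1}\neq 1$, so the coefficient of $m$ is nonzero; dividing through and simplifying gives exactly $m=\gamma$ with $\gamma$ as in \eqref{gammaval}. Hence, for $m\in\fp$, the induced $n$ lies in $\fp$ if and only if $m=\gamma$, and since $m$ must itself lie in $\fp$ such an $m$ exists if and only if $\gamma\in\fp$, i.e.\ $\gamma^p=\gamma$.

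For the converse I would run this computation backwards: assuming $\gamma^p=\gamma$, set $m=\gamma\in\fp$ and $n=-c/\bigl(\beta(mc\alpha+d)\bigr)$, check directly that the $(2,1)$ entry vanishes, and use the displayed identity (now valid by the choice $m=\gamma$) to conclude $n^p=n$, so $n\in\fp$. The main obstacle I anticipate is purely bookkeeping: pushing the Frobenius correctly through the rational expression for $n$ and carrying out the two factorisations above without sign or exponent slips. One should also dispatch the degenerate subcase $\gamma c\alpha+d=0$, in which the stated formula for $n$ is undefined; a short check shows this forces $d/(c\alpha)\in\fp$, and it should be excluded (or observed that then no admissible $n$ exists).
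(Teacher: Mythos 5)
Your proposal is correct and follows essentially the same route as the paper's proof: force the $(2,1)$ entry to vanish to get $n=-c/\bigl(\beta(mc\alpha+d)\bigr)$, then impose $n^p=n$ with $m\in\fp$ and solve the resulting linear equation (whose coefficient is nonzero precisely because $\alpha\beta\notin\fp$) to obtain $m=\gamma$, so that the whole condition collapses to $\gamma^p=\gamma$. Your extra attention to the degenerate cases $c=0$ and $\gamma c\alpha+d=0$ is a small tidy-up of details the paper leaves implicit, not a different argument.
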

\begin{pf}
     From the proof of Lemma~\ref{triang-hash}, it is clear that for $CA_0^mA_1^n$ to be upper triangular, Equation~\eqref{nval} must hold, along with $m^p=m$ and $n^p=n$. We have, \begin{align*}
   & -n = c/(\beta(m\alpha c+d)) \in \fp \\
     \implies & c^p/(\beta^p(m\alpha^pc^p+d^p)) = c/(\beta(m\alpha c+d))\\
     \implies & (c/\beta)^{p-1}\cdot(m\alpha c+d)=m\alpha^p+d^p \\
     \implies & c^{p-1}\cdot (m\alpha c+d) = \beta^{p-1}\cdot(m\alpha^p c^p+d^p)\\
     \implies & m\alpha c^p\cdot(1-(\alpha\beta)^{p-1})=\beta^{p-1}d^p-c^{p-1}d \\  \implies & m=\dfrac{   d ((d\beta)^{p-1}-c^{p-1})}{\alpha c^p(1-(\alpha\beta)^{p-1})}       \end{align*}  
Thus, for $m=\gamma$, one necessarily has $n:=\dfrac{-c}{\beta(mc\alpha+d)} \in \fp$. So, in order to obtain an upper triangular matrix, the only condition that needs to be satisfied is $\gamma^p=\gamma$.
\end{pf}

\begin{lemma}[\bfseries Case $k=2$]
Let $k=2$ and $\alpha\cdot \beta\not\in\fp$. As before, let $C=\begin{pmatrix}a & b \\ c& d\end{pmatrix}\in SL_2(\f_{p^{2}})$ be an arbitrary product of finitely many copies of $A_0$ and $A_1$. Then with $\gamma$ defined as in \eqref{gammaval}, $\gamma^p=\gamma$ always holds.  \end{lemma}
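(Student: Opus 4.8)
The plan is to reduce the expression \eqref{gammaval} for $\gamma$ to one involving only $d$, the quotient $c':=c/\beta$ and $v:=\alpha\beta$ — all of which automatically lie in $\f_{p^{2}}$ — and then to exploit that the Frobenius $x\mapsto x^{p}$ has order $2$ on $\f_{p^{2}}$. Concretely, substituting $c=\beta c'$ into \eqref{gammaval} makes the numerator $d\beta^{p-1}\bigl(d^{p-1}-(c')^{p-1}\bigr)$, and, since $\alpha c^{p}=\alpha\beta^{p}(c')^{p}=v\,\beta^{p-1}(c')^{p}$, the denominator $v\,\beta^{p-1}(c')^{p}\bigl(1-v^{p-1}\bigr)$; cancelling $\beta^{p-1}$ gives
\[
\gamma=\frac{d\bigl(d^{p-1}-(c')^{p-1}\bigr)}{v\,(c')^{p}\bigl(1-v^{p-1}\bigr)}=\frac{d^{p}-d\,(c')^{p-1}}{(c')^{p}\,(v-v^{p})}.
\]
This step uses the lemma's hypotheses to rule out the two degenerate cases: $c=0$ would make $\gamma$ undefined, but $C$ is being put into triangular form, not assumed already triangular, so $c,c'\neq0$; and $v=\alpha\beta\notin\fp$ by assumption, so $v\neq v^{p}$ and the denominator is nonzero.

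It then remains to check $\gamma^{p}=\gamma$ for the reduced form. The cleanest way is to note that $\gamma$ is, up to a factor in $\fp$, a ratio of two trace-zero elements of $\f_{p^{2}}$: multiplying numerator and denominator by $c'$ yields $\gamma=\dfrac{Y^{p}-Y}{\mathrm{N}(c')\,(v-v^{p})}$ with $Y:=(c')^{p}d\in\f_{p^{2}}$ and $\mathrm{N}(c')=(c')^{p+1}\in\fp$. Frobenius negates $Y^{p}-Y$ and $v-v^{p}$ and fixes $\mathrm{N}(c')$, hence fixes the ratio, so $\gamma\in\fp$. (Alternatively one applies $x\mapsto x^{p}$ directly to the displayed identity, uses $x^{p^{2}}=x$ and $(c')^{p^{2}-1}=1$ to rewrite $(c')^{p(p-1)}=(c')^{1-p}$, and clears $(c')^{p-1}$ from top and bottom to recover the original expression.)

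I do not expect a genuine obstacle here; the delicate points are merely the exponent bookkeeping modulo $p^{2}-1$ in the Frobenius computation and the explicit exclusion of the degenerate cases. It is worth recording what makes $k=2$ special: the reduction above in fact proves $\gamma^{p}=\gamma$ for \emph{every} $C\in\GL_{2}(\f_{p^{2}})$ with nonzero lower-left entry — it uses neither $\det C=1$ nor that $C$ is a word in $A_{0},A_{1}$, only the order-$2$ Frobenius. For $k>2$ this collapses, and the structural input one would need instead is the ``graded'' shape of words in the generators, namely $a,d\in\fp[\alpha\beta]$, $b\in\alpha\,\fp[\alpha\beta]$ and $c\in\beta\,\fp[\alpha\beta]$, proved by an easy induction on word length exactly as in the preceding lemma.
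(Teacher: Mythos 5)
Your proof is correct and rests on the same idea as the paper's: since the Frobenius $x\mapsto x^{p}$ has order $2$ on $\f_{p^{2}}$, one checks directly that the expression \eqref{gammaval} is Frobenius-invariant (the paper does this by applying $x\mapsto x^p$ to $\gamma$ and using $y^{p^{2}-p}=y^{1-p}$, which is exactly your parenthetical alternative). Your repackaging of $\gamma$ as $\bigl(Y^{p}-Y\bigr)/\bigl(\mathrm{N}(c')(v-v^{p})\bigr)$ is just a tidier bookkeeping of the same computation, with the added (correct) observations about the degenerate cases $c=0$ and $\alpha\beta\in\fp$.
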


\begin{pf}
In this case, for any nonzero field element $y$ we have $y^{p^2-p}=y^{1-p}$, so \small\begin{align*}\gamma^p= \left(\dfrac{d ((d\beta)^{p-1}-c^{p-1})}{\alpha c^p(1-(\alpha\beta)^{p-1})}\right)^p & = (d^pc/\alpha^p)  \left(\dfrac{ ((d\beta)^{1-p}-c^{1-p})}{(1-(\alpha\beta)^{1-p})}\right) \\ &=  (d^p/c\alpha) 1/(d\beta c)^{p-1}  (\alpha\beta)^{p-1} \left(\dfrac{((d\beta)^{p-1}-c^{p-1})}{(1-(\alpha\beta)^{p-1})}\right) \\ &= d/(\alpha c^p) \left(\dfrac{((d\beta)^{p-1}-c^{p-1})}{(1-(\alpha\beta)^{p-1})}\right) =\gamma. \end{align*}\end{pf}
\normalsize

Thus, in $SL_2(\f_{p^{2}})$ one can always find collisions by right-multiplying an arbitrary matrix by a product $A_0^mA_1^n$. In other words, a message in $SL_2(\F_{p^2})$ is always extendable to have a triangular hash.

\begin{example} Consider $SL_2(\f_{p^{2}})$ for $p=239$, with generator denoted by $z$. Consider the following message text \begin{align*}
T=&0^{17}1^{25}0^{11}1^{8}0^{38}1^{33}0^{29}1^{27}0^{40}1^{24}0^{4}1^{2}0^{38}1^{16}0^{16}1^{39}0^{50}1^{42}0^{36}1^{22}0^{2}1^{41}0^{27}1^{29}0^{11}\\&1^{15}0^{29}1^{47}0^{29}1^{33}0^{47}1^{3}0^{35}1^{32}0^{49}1^{27}0^{24}1^{0}0^{21}1^{49}0^{33}1^{4}0^{50}1^{44}0^{42}1^{43}0^{4}1^{29}0^{14}1^{39}\\&0^{14}1^{15}0^{41}1^{0}0^{41}1^{1}0^{20}1^{34}0^{4}1^{6}0^{43}1^{5}0^{11}1^{7}0^{37}1^{29}0^{40}1^{20}0^{1}1^{13}0^{2}1^{14}0^{31}1^{41}0^{19}1^{24}\\&0^{50}1^{50}0^{25}1^{23}0^{30}1^{39}0^{6}1^{46}0^{39}1^{27}0^{8}1^{9}0^{25}1^{38}0^{0}1^{46}0^{15}1^{33}0^{47}1^{40}0^{40}1^{26}0^{48}1^{45}\end{align*}
 For a random choice of $\alpha$ and $\beta$, we may obtain hash $H(T)=\begin{pmatrix}134z + 110 & 131z + 185 \\ 74z + 17  & 58z + 41\end{pmatrix}$. Calculating $m$ and $n$ as per Lemma~\ref{triang-hash}, we see that $H(T)0^1 1^{10} = \begin{pmatrix}106z + 192  & 25z + 30 \\0 & 218z + 62\end{pmatrix}$.
\end{example}


A natural question arises here: can we generalize this method to make \[C\cdot A_0^{m_1}A_1^{n_1}\ldots A_0^{m_r} A_1^{n_r}\] upper/lower triangular and thereby extend the result to all $SL_2(\fpk)$? Since the major constraint is the condition $\gamma^p=\gamma$ (and the size of $\gamma$ does not seem easy to control), we first explore how the value of $\gamma$ changes when $C$ is multiplied on the right by a random product $A_0^mA_1^n$.

\begin{theorem}\label{gammas}
    Let $C=\begin{pmatrix}a & b \\ c& d\end{pmatrix}$
and $C':=C\cdot A_0^m \cdot A_1^n=\begin{pmatrix}a' & b' \\ c'& d'\end{pmatrix}$. Let $\gamma$ and $\gamma'$ be defined as in equation \eqref{gammaval}. Then, we have \[\gamma'=(c/c')^{p+1} (\gamma-m)\]
\end{theorem}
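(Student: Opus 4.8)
The plan is to compute $c'$, $d'$, and then $\gamma'$ directly in terms of the entries of $C$ and the parameters $m,n,\alpha,\beta$, and to simplify using the identity $ad-bc=1$ and the known product formula. First I would record that
\[
C' = C\cdot A_0^m\cdot A_1^n = \begin{pmatrix} a(1+mn\alpha\beta)+nb\beta & ma\alpha+b \\ c(1+mn\alpha\beta)+nd\beta & mc\alpha+d\end{pmatrix},
\]
exactly as in the proof of Lemma~\ref{triang-hash}. So $c' = c(1+mn\alpha\beta)+nd\beta$ and $d' = mc\alpha+d$. The key observation to exploit is that the denominator of $\gamma$ in \eqref{gammaval}, namely $\alpha c^p(1-(\alpha\beta)^{p-1})$, scales in a controlled way: replacing $c$ by $c'$ multiplies it by $(c'/c)^p$, and the factor $\alpha(1-(\alpha\beta)^{p-1})$ is unchanged. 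So the heart of the matter is to show that the numerator $d'((d'\beta)^{p-1}-c'^{p-1})$ relates to $d((d\beta)^{p-1}-c^{p-1})$ after accounting for the substitution, in such a way that the net effect is multiplication by $(c/c')^{p+1}$ together with the shift $\gamma \mapsto \gamma - m$.

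The cleanest route is probably to go back to the \emph{derivation} of $\gamma$ in the proof of the Proposition preceding this theorem rather than to the closed form \eqref{gammaval}. There, $\gamma$ arises as the unique value of $m_0$ for which $-c/(\beta(m_0\alpha c+d)) \in \fp$; equivalently $\gamma$ is characterized by the property that $CA_0^\gamma A_1^{n_0}$ is ``lower-left-Frobenius-stable'' for the appropriate $n_0$. I would set up the analogous characterization for $C'$: $\gamma'$ is the value of $m'$ making $-c'/(\beta(m'\alpha c' + d')) \in \fp$. Now $CA_0^{m+m'}A_1^{n''} = C'A_0^{m'}A_1^{n''}$ for a suitable $n''$ (since $A_1^nA_0^{m'}$ is not $A_0^{m'}A_1^n$, one must be careful here — this is where a direct entrywise computation may actually be safer than the conjugation shortcut). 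The upshot should be that the defining equation for $\gamma'$ is obtained from that for $\gamma$ by the replacement $m \leftarrow m + \gamma'$ in the role of the total upper-left exponent, giving $\gamma = m + \gamma'$ up to the scaling factor coming from $c$ versus $c'$ — and that scaling factor, tracked through the $p$-th power operations, is exactly $(c/c')^{p+1}$.

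Concretely I would: (1) substitute $c',d'$ into \eqref{gammaval} to get an explicit expression for $\gamma'$; (2) expand $(d'\beta)^{p-1} - c'^{p-1}$ and $d'$; (3) pull out the factor $(c/c')^p$ from the denominator, leaving $(c/c')^{-p}$ times something that must be shown to equal $(c/c')^{p+1}(\gamma - m)$, i.e.\ show the remaining numerator-over-denominator equals $(c/c')^{2p+1}(\gamma-m)$ — wait, more carefully, I would just verify the single identity $\gamma' \cdot (c'/c)^{p+1} = \gamma - m$ by clearing all denominators and reducing everything to a polynomial identity in $a,b,c,d,m,n,\alpha,\beta$ modulo the single relation $ad-bc=1$. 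The main obstacle I anticipate is the bookkeeping in step (2)–(3): the term $(d'\beta)^{p-1}$ expands as $(mc\alpha+d)^{p-1}\beta^{p-1}$, which does \emph{not} split as a sum of $p$-th powers the way $(mc\alpha+d)^p$ would, so one cannot naively Frobenius-distribute; instead one has to write $(d'\beta)^{p-1} = (d'\beta)^p/(d'\beta)$ and handle the $p$-th power, which does split, before dividing back. Keeping track of the resulting rational expression and confirming that the spurious-looking factors cancel to leave precisely $(c/c')^{p+1}(\gamma-m)$ is the computational crux; everything else is substitution and the $SL_2$ relation.
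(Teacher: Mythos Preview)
Your approach is essentially the paper's: direct substitution of $c',d'$ into \eqref{gammaval} followed by the Frobenius trick $(u+v)^{p-1}=(u^p+v^p)/(u+v)$ and algebraic simplification. Two remarks that will save you the bookkeeping you anticipate. First, the clean identity you are missing is
\[
c' \;=\; c(1+mn\alpha\beta)+nd\beta \;=\; c + n\beta(mc\alpha+d) \;=\; c + n\beta\, d'.
\]
This is what the paper exploits: it applies the Frobenius trick to $(c')^{p-1}$ via this relation, not to $(d'\beta)^{p-1}$ as you propose. Writing $(c')^{p-1}=\bigl(c^p+(n\beta d')^p\bigr)/c'$ and multiplying through by $c'$ immediately produces a numerator of the form $c\bigl((d'\beta)^{p-1}-c^{p-1}\bigr)$ plus terms that cancel, yielding the factor $(c/c')^{p+1}$ without ever having to expand $(mc\alpha+d)^{p-1}$. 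Second, your detour through the ``characterization of $\gamma$'' and the attempted relation $\gamma=m+\gamma'$ via $CA_0^{m+m'}$ is, as you yourself suspect, obstructed by the non-commutativity of $A_0$ and $A_1$; the paper does not attempt this and you should drop it. With the identity $c'=c+n\beta d'$ in hand, the direct computation is short and the $SL_2$ relation $ad-bc=1$ is never even needed.
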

\begin{pf}   Let \[  X:=\dfrac{1}{\alpha(1-(\alpha\beta)^{p-1})}, \
  \gamma:=\dfrac{Xd}{c^p}[(d\beta)^{p-1}-c^{p-1}]
\]    We have, \begin{align*} \gamma' =& X \dfrac{d' ((d'\beta)^{p-1} - (n\beta d'+c)^{p-1}))}{(c')^p} = Xd' \dfrac{(d'\beta)^{p-1}-\frac{n\beta (d')^p+c^p}{n\beta d'+c}}{(c')^p} \\
    =& Xd' \dfrac{(d'\beta)^{p-1}(n\beta d+c))-n\beta (d')^p-c^p}{(c')^{p+1}} \\
    =& Xd'c \dfrac{(d'\beta)^{p-1}-c^{p-1}}{(c')^{p+1}}= Xd'c \dfrac{(d'\beta)^{p-1}-c^{p-1}}{(c')^{p+1}} \\
   =& \dfrac{Xcd'}{c'^{p+1} d'\beta} [(mc^p\alpha^p)\beta^p-c^{p-a1}(cm\alpha+d)\beta] \\
=& \dfrac{Xc}{{c'}^{p+1} \beta} [mc^p\alpha\beta((\alpha\beta)^{p-1}-1)+d\beta((d\beta)^{p-1}-c^{p-1})] \end{align*}
     Plugging the values of $X$ and $\gamma$ from above into the equation, we get \begin{align*}
         \gamma'=&\dfrac{Xc}{(c')^{p+1}\beta}mc^p\alpha\beta((\alpha\beta)^{p-1}-1) + \dfrac{Xcd}{(c')^{p+1}}\left(\dfrac{c^p\gamma}{Xd}\right) \\
         =& \left(\dfrac{c}{c'}\right)^{p+1}[-m +\gamma]
     \end{align*}
\end{pf}

For an extension where multiplication by a product $A_0^mA_1^n$ is allowed twice, we have the following result.

\begin{lemma}\label{triang-mul}
    For $C:=\begin{pmatrix}a & b \\ c& d\end{pmatrix}$, there exits integers $m_1,m_2,n_1,n_2$ such that $CA^{m_1}B^{n_1}A^{m_2}B^{n_2}$ is upper triangular if and only if the equation \begin{equation}\label{grob-pol}
        q_3 x^2 y+ q_2 xy + q_1 y + q_0=0 \end{equation}  has a solution $(x,y)\in \fp\times \fp$, where $q_0, q_1, q_2, q_3$ are given by \begin{align}\label{qs1} \begin{split}
q_3 &=c^{p^{2}}\alpha\beta((\alpha\beta)^{p^{2}-1}-1), \\
  q_2 &=c^{p^{2}}\gamma\alpha\beta(\gamma^{p-1}-(\alpha\beta)^{p^{2}-1}) + d\beta((d\beta)^{p^{2}-1}-1),\\
   q_1 &=d\beta \gamma (c^{p^{2}}\gamma^{p-1}-(d\beta)^{p^{2}-1}),\\
  q_0 &=c^{p^{2}}\gamma(\gamma^{p-1}-1).
  \end{split}
\end{align}
\end{lemma}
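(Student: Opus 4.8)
The plan is to iterate Theorem~\ref{gammas} twice and collect the resulting condition into a single polynomial identity. By the Proposition preceding Theorem~\ref{gammas}, a matrix $M$ admits an extension $MA_0^mA_1^n$ that is upper triangular for some $m,n\in\fp$ precisely when its associated invariant $\gamma_M$ satisfies $\gamma_M^p=\gamma_M$ (i.e.\ $\gamma_M\in\fp$), in which case one takes $m=\gamma_M$. So for $C$ of the statement, I would first apply Theorem~\ref{gammas} to $C$ with parameters $m_1,n_1$, obtaining $C_1:=CA_0^{m_1}A_1^{n_1}$ with invariant $\gamma_1=(c/c_1)^{p+1}(\gamma-m_1)$, where $\gamma=\gamma_C$ and $c_1=c_1(m_1,n_1)$ is the lower-left entry of $C_1$. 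The matrix $CA_0^{m_1}A_1^{n_1}A_0^{m_2}A_1^{n_2}$ is upper triangular for suitable $m_2,n_2\in\fp$ iff $\gamma_1\in\fp$; substituting this value of $m_1=\ldots$ is \emph{not} forced here — rather, $m_1,n_1$ remain free and the single constraint is $\gamma_1^p=\gamma_1$.

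Next I would make the substitution explicit. Recall from the proof of Lemma~\ref{triang-hash} that the lower-left entry of $CA_0^{m}A_1^{n}$ is $c' = c(1+mn\alpha\beta)+nd\beta = c + n\beta(cm\alpha+d)$; setting $x:=m_1$ and letting $n_1$ vary, $c_1$ is an affine function of $n_1$. The cleanest route is to eliminate $n_1$ in favour of $c_1$ (or of $y:=c/c_1$, which is what ultimately appears): treat $x=m_1\in\fp$ and $y:=(c/c_1)\in\fp$ (equivalently $c_1=c/y$) as the two free parameters, noting that as $(m_1,n_1)$ ranges over $\fp\times\fp$ with $c_1\neq 0$, the pair $(x,y)$ ranges over essentially all of $\fp\times\fp^*$. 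Then $\gamma_1 = y^{p+1}(\gamma-x)$, and the condition $\gamma_1^p=\gamma_1$ becomes $y^{p^2+p}(\gamma-x)^p = y^{p+1}(\gamma-x)$. Using $\alpha\beta\notin\fp$ and the defining formula \eqref{gammaval} for $\gamma$ — now written for the matrix $C_1$ with entries expressed through $x,y$ and the entries of $C$ — and clearing denominators (the denominator being $\alpha c_1^{p^2}(1-(\alpha\beta)^{p^2-1})$, nonzero since $\alpha\beta\notin\fp$ forces $(\alpha\beta)^{p^2-1}\neq 1$ when... — more carefully, one uses that the relevant power of $\alpha\beta$ is a nontrivial root of unity), the condition rearranges to the stated quartic-looking relation $q_3x^2y+q_2xy+q_1y+q_0=0$. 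The coefficients $q_0,\dots,q_3$ in \eqref{qs1} should drop out by direct computation; I would verify them by expanding $\gamma_1^p - \gamma_1$ symbolically (e.g.\ via the same manipulations as in the proof of Theorem~\ref{gammas}, but keeping $c_1$ symbolic and then substituting $c_1=c/y$), matching powers of $x$ and $y$.

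The two directions then follow: if \eqref{grob-pol} has a solution $(x,y)\in\fp\times\fp$, unwind to recover $m_1=x$, then $n_1$ from $c_1=c/y$ and the affine relation above (here one needs $cm_1\alpha+d\neq 0$, which holds for all but at most one value of $x$, so if the offending $x$ is the only solution one perturbs — or one argues the degenerate case separately), then set $m_2=\gamma_1\in\fp$ and $n_2=-c_1/(\beta(m_2c_1\alpha+d_1))\in\fp$ as in Lemma~\ref{triang-hash}; conversely any upper-triangularizing $m_1,n_1,m_2,n_2$ forces $\gamma_1\in\fp$ hence yields a solution $(x,y)=(m_1,c/c_1)$.

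\textbf{Main obstacle.} The routine-but-delicate part is the bookkeeping in the change of variables $n_1\leftrightarrow y$: one must check that $c_1\neq 0$ (so $y$ is well-defined and nonzero), that the map $(m_1,n_1)\mapsto(x,y)$ is genuinely onto $\fp\times\fp^*$ (or handle the missing fibre), and that the single bad value of $m_1$ making $cm_1\alpha+d=0$ does not sabotage the "if" direction — these edge cases are where a clean statement can quietly fail. The other point requiring care is justifying that $1-(\alpha\beta)^{p^2-1}\neq 0$ (equivalently that $\gamma_1$'s defining denominator is nonzero) under the hypothesis $\alpha\beta\notin\fp$; this needs a short argument about the order of $\alpha\beta$ in $\fpk^*$ not dividing $p^2-1$ in the relevant way, or more precisely that the exponent appearing is a nontrivial multiple, so that clearing denominators is legitimate and no spurious solutions are introduced. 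Everything else is a mechanical expansion mirroring the proof of Theorem~\ref{gammas}.
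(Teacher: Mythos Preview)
Your overall strategy --- reduce the existence of $m_2,n_2$ to the condition $\gamma_1^p=\gamma_1$ via the Proposition and Theorem~\ref{gammas}, then treat $m_1,n_1$ as free parameters --- is exactly the paper's. The error is in the change of variables. You set $y:=c/c_1$, but the polynomial \eqref{grob-pol} is the one obtained with $y:=n_1$ (and $x:=m_1$). With your choice, $\gamma_1=y^{p+1}(\gamma-x)$ and the condition $\gamma_1^p=\gamma_1$ reads $y^{p^2-1}(\gamma-x)^{p-1}=1$, which has degree $p^2-1$ in $y$; no denominator-clearing turns this into the stated expression $q_3x^2y+q_2xy+q_1y+q_0$, which is linear in $y$. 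Worse, $c$ and $c_1$ lie in $\fpk$, so $c/c_1$ has no reason to lie in $\fp$, and your claim that $(m_1,n_1)\mapsto(x,y)$ surjects onto $\fp\times\fp^*$ is false in general.

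The paper's route sidesteps all of this. Keep $x=m_1$ and $y=n_1$ (both automatically in $\fp$), write the condition $(\gamma')^p=\gamma'$ in the cross-multiplied form
\[
c^{p^2}\,c'\,(\gamma^p-x)\;=\;c\,(c')^{p^2}\,(\gamma-x),
\]
then substitute $c'=c+(c\alpha x+d)\beta y$ and, using Frobenius together with $x^{p^2}=x$, $y^{p^2}=y$, the companion formula $(c')^{p^2}=c^{p^2}+(c^{p^2}\alpha^{p^2}x+d^{p^2})\beta^{p^2}y$. Both sides are now linear in $y$ and quadratic in $x$; expanding and collecting terms yields precisely the coefficients $q_0,\dots,q_3$ of \eqref{qs1}. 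With this substitution your ``main obstacles'' disappear: there is no change-of-variable bijection to check, no fibre to worry about, and the factor $1-(\alpha\beta)^{p^2-1}$ never appears as a denominator to be cleared --- it is already absorbed into $\gamma$.
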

\begin{pf}
 Let $C':=C\cdot A^m \cdot B^n=\begin{pmatrix}a' & b' \\ c'& d'\end{pmatrix}$ and let $\gamma'$ be defined as in equation \eqref{gammaval}. By Theorem~\ref{gammas} we have $\gamma':= (c/c')^{p+1}(\gamma-m) \in \fp$. We treat $x:=m$ and $y:=n$ as variables in $\fp$. For $C'$ to be upper triangular for some values of $x$ and $y$, we require
    \begin{align*}
       & (\gamma')^p=\gamma' \\
      \implies & c^{p^2+p}(c')^{p+1}(\gamma^p-x) = c^{p+1}(c')^{p^{2}+p} (\gamma-x) \\
      \implies & c^{p^{2}}c'(\gamma^p-x) = c(c')^{p^2}(\gamma-x) \\
      \implies & c^{p^2}(\gamma^p-x)(c+(c\alpha x+d)y\beta) = c(c^{p^{2}})+ (xc^{p^2}\alpha^{p^2}+d^{p^2})y\beta^{p^2})(\gamma-x) \\      
\implies & c^{p^2-1}[\gamma^pc + \gamma^p c \alpha \beta xy+\gamma^pd\beta y -cx -c\alpha\beta x^{2y - d\beta xy}= c^{p^2}\gamma + (c\alpha\beta)^{p^2}\gamma xy+\\ & \qquad \qquad  \qquad \qquad \qquad \qquad  \qquad (d\beta)^{p^2}\gamma y- c^{p^2}x - (c\alpha\beta)^{p^2}x^2y - (d\beta)^{p^2} xy \\  
\implies & ((c\alpha\beta)^{p^2}-c^{p^2}\alpha\beta)x^2y + (-(c\alpha\beta)^{p^2}\gamma + \gamma^p c^{p^2}\alpha\beta+(d\beta)^{p^2}-d\beta) xy +\\ &  \qquad \qquad \qquad \qquad  \qquad \qquad \qquad (c^{p^2}\gamma^p(d\beta) - \gamma (d\beta)^{p^2})y + (\gamma^p c^{p^2}-\gamma c^{p^2})=0 \\ 
\implies & q_3 x^2 y+ q_2 xy + q_1 y + q_0=0
    \end{align*}
where $q_0, q_1, q_2, q_3$ are given as in Equation~\eqref{qs1}. The proof is now complete.
\end{pf}

  Note that for $q_0, q_1, q_2, q_3$ as in Equation~\eqref{qs1}, we can rephrase the condition in Lemma~\ref{triang-mul} as follows. $C:=\begin{pmatrix}a & b \\ c& d\end{pmatrix}$ is upper triangularizable by right multiplication if and only if the system of equations \begin{equation}\begin{split}
        & q_3 x^2 y+ q_2 xy + q_1 y + q_0=0 \\
        & x^p=x \\
        & y^p=y     
    \end{split}
    \end{equation} has a simultaneous solution $(x,y)$ in $\fpk\times \fpk$. One may then use Gr\"obner bases to solve this system of equations.
A corresponding conditions for $C$ to be lower triangularizable by left/right multiplication and to be upper triangularizable by left multiplication can be derived similarly. 

\begin{example}
    For simplicity, consider the field $\F_{2^{5}}$ with generator $z_5$ and $\alpha=z_5^3 + 1$, $\beta=z_5^3 + z_5^2 + 1$. Consider the hash matrix \\ \[C=\begin{pmatrix}
z_5^4 + z_5^3 + z_5^2 + z_5 & z_5^4 + z_5^3 + z_5^2 + z_5 \\        
z_5^3  &    z_5^4 + z_5^3 + z_5^2\end{pmatrix}.\] Here, we have $\gamma=z_5^4 + z_5 + 1
$ and the polynomial in Equation~\eqref{grob-pol} is $(z_5^2 + z_5)x^2y + (z_5^3 + z_5^2 + 1)xy + (z_5^3)y + (z_5^4 + z_5^2 + z_5)$. The $\langle (z_5^2 + z_5)x^2y + (z_5^3 + z_5^2 + 1)xy + z_5^3y + (z_5^4 + z_5^2 + z_5), x^p-x, y^p-y\rangle$ is trivial, so its Gr\"obner basis is $\{1\}$.
\end{example}

We attempted the above experiment with various different values of $p$ and $k$ and random matrices $C$ which are products of $A$ and $B$. In each case, without exception, we found no solution (i.e.\ the resulting Gr\"obner base of the ideal $\langle q_3x^2y+q_2xy+q_1y+q_0,x^p-x, y^p-y \rangle $ was $\langle 1 \rangle $). Therefore, this method of extension to produce a triangular matrix is not practically feasible.



\section{Generalized Tillich-Z\'emor Hash Functions}\label{gen-TZ}

In this section, we consider the generalized Tillich-Z\'emor hash function $\phi$ with the generators $A_0=\begin{pmatrix}\alpha & 1 \\ 1 & 0 \end{pmatrix}$ and $A_1=\begin{pmatrix}\beta & 1 \\ 1 & 0 \end{pmatrix}$ where $\alpha, \beta \in \fpk$.

 
More generally, treating $x$ as a variable, we consider the matrix $Y=\begin{pmatrix}x & 1 \\ 1 & 0 \end{pmatrix}$ and first compute its powers. Clearly, we have \begin{align*}
    Y^2=\begin{pmatrix}x^2+1 & x \\ x & 1 \end{pmatrix}, \ Y^3=\begin{pmatrix}x^3+x^2+x+1 & x^2+1 \\ x^2+1 & x \end{pmatrix}
\end{align*}
More generally, we can write 
\begin{align}\label{yn}
    Y^n=\begin{pmatrix}f_n(x) & f_{n-1}(x) \\ f_{n-1}(x) & f_{n-2}(x) \end{pmatrix}, \ n\geq 2
\end{align}
where $f_0(x)= 0$, $f_1(x)=1$, and \begin{equation}\label{recur}
f_n(x)=xf_{n-1}(x) + f_{n-2}(x)
\end{equation}
It is clear that the recurrence relation \eqref{recur} fully describes the powers of the matrix $Y$. The above formulation and recurrence relation were also derived in  \cite{onsec}.

\subsection{Computing $f_n(x)$ for characteristic $p\neq 2$}
In  \cite{onsec}, an expression for $f_n(x)$ was derived in the case $p=2$. The authors also calculate probabilities that the orders of A and B lie within certain bounds. In this section, we address the case $p\neq 2$ and derive a closed formula for $f_n(x)$. 

We may solve the recurrence Equation~\eqref{recur} by finding roots of the auxiliary polynomial $t^2-xt-1=0$. The quadratic formula then gives us \[t = \dfrac{x\pm \sqrt{x^2+4}}{2}.\]
Setting $r=\dfrac{x+ \sqrt{x^2+4}}{2}$ and $s=\dfrac{x- \sqrt{x^2+4}}{2}$, a general solution to \eqref{recur} has to be of the form $f_n(x)=ur^n+vs^n$. Note that we also have the initial conditions $f_1(x)=1$ and $f_0(x)=1$ (or equivalently $f_2(x)=x^2+1$). Plugging these in, we may solve for $u$ and $v$ to get \begin{align*}
    u=\dfrac{x+\sqrt{x^2+4}}{2\sqrt{x^2+4}}, \
    v= 1-u=\dfrac{-x+\sqrt{x^2+4}}{2\sqrt{x^2+4}}.
\end{align*} 
Write $w=\sqrt{x^2+4}$. Note that for any $n\geq 1$, we have \begin{align}\label{fnx}
    \begin{split}
    f_n(x)&=ur^n+vs^n = \left(\dfrac{x+w}{2w}\right)\left(\dfrac{(x+w)^n}{2}\right) + \left(\dfrac{-x+w}{2w}\right)\left(\dfrac{(x-w)^n}{2}\right) \\
    & = \left(\dfrac{(x+w)^{n+1}}{2^{n+1}w}\right)-\left(\dfrac{(x-w)^{n+1}}{2^{n+1}w}\right) \\
    & = \dfrac{1}{2^{n+1}w}\left[\sum\limits_{i=0}^{n+1}
    \binom{n+1}{i} x^i w^{n+1-i}- \sum\limits_{i=0}^{n+1} \binom{n+1}{i} x^iw^{n+1-i}\right] \\
    & = \dfrac{1}{2^{n+1}} \left[\sum\limits_{0\leq i \leq n+1 , \ n-i \text{ is even}} \binom{n+1}{i} x^i w^{n-i} \right] \\ &= \dfrac{1}{2^{n+1}}\left[\sum\limits_{0\leq i \leq n+1, \ n-i \text{ is even} }\binom{n+1}{i} \sum\limits_{j=0}^{(n-i)/2} \binom{(n-i)/2}{j}x^{i+2j}4^{(n-i)/2-j} \right] \\
    &= \dfrac{1}{2^{n+1}}\left[\sum\limits_{0\leq i \leq n, \ n-i \text{ is even}}\sum\limits_{j=0}^{(n-i)/2}
    \binom{n+1}{i}\binom{(n-i)/2}{j}2^{n-2j}x^{i+2j} \right]    \\
    \end{split}
\end{align}
Thus, $f_n(x)$ is always a polynomial in $\fp[x]$, and we have a closed formula for calculating it. Powers of $A_0$ and $A_1$ may therefore be computed in constant time.

Note that $\fpk$ is typically viewed through the isomorphism $\fpk \cong \fp[x]/\langle p(x) \rangle$ where $p(x)$ is an irreducible polynomial of degree $k$ over $\fp$. Thus, an element $\gamma \in \fpk$ is a polynomial of degree smaller than $k$, say $\gamma=g_\gamma(x)$. $f_n(\gamma)$ can then be calculated as a polynomial modulo $p(x)$ by simply composing $f_n$ and $g$, i.e.\ $f_n(\gamma)=f_n(g_\gamma(x)) \pmod{p(x)}$. 



We also have the following straightforward observation. A similar statement is found in \cite{onsec}.

\begin{lemma}
    Let $q(x)$ be any irreducible polynomial of degree $d$. Then, the sequence $\{f_n(x) \pmod{p(x)}\}$ is periodic, and its order divides $(p^{2k}-p^k)(p^{2k}-1)/(p^k-1)$.
\end{lemma}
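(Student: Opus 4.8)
The plan is to realize the sequence $\{f_n(x) \bmod q(x)\}$ as the orbit of an element under the action of the matrix $Y = \begin{pmatrix} x & 1 \\ 1 & 0 \end{pmatrix}$ inside $GL_2(\fp[x]/\langle q(x)\rangle) = GL_2(\F_{p^{kd}})$, and then bound the period by the order of $Y$ in that group. Concretely, from Equation~\eqref{yn} we have that the entries of $Y^n$ are exactly $f_n(x), f_{n-1}(x), f_{n-2}(x)$ reduced mod $q(x)$, so the sequence $\{f_n(x) \bmod q(x)\}$ is periodic with period dividing $\ord(Y)$, the multiplicative order of $Y$ in $GL_2(\F_{p^{kd}})$. (One should note $Y$ is invertible since $\det Y = -1 \neq 0$ as $p \neq 2$; the statement should implicitly be in the $p\neq 2$ setting of this section, or one observes $\det Y = -1$ is a unit regardless.)

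The second step is to bound $\ord(Y)$. Since $Y \in GL_2(\F_{p^{kd}})$, its order divides $|GL_2(\F_{p^{kd}})| = (p^{2kd} - 1)(p^{2kd} - p^{kd})$. But one can do better and match the claimed bound: $\det Y = -1$, so $Y$ lies in the coset of $SL_2$ determined by $-1$, and in any case $\ord(Y)$ divides $\lcm$ of $\ord(\det Y)$ with $|SL_2(\F_{p^{kd}})|$, or more simply $\ord(Y^2) \mid |SL_2(\F_{p^{kd}})| = p^{kd}(p^{2kd}-1)$. Actually the cleanest route to the stated number: note $(p^{2k}-p^k)(p^{2k}-1)/(p^k-1) = p^k(p^k-1)(p^k+1)(p^k-1)/(p^k-1)\cdot\ldots$ — here I would simply write $q$ of degree $d$ so the relevant field is $\F_{p^{kd}}$, and observe the stated bound is meant with $k$ replaced appropriately; I would rewrite the target as: the order divides $(p^{2D}-p^{D})(p^{2D}-1)/(p^{D}-1)$ with $D = kd$, which equals $|GL_2(\F_{p^{D}})|/(p^{D}-1) = |PGL_2(\F_{p^D})|\cdot$(something), and conclude by the standard fact that $\ord(Y)$ in $GL_2$ divides $|GL_2|$, combined with the scalar-matrix reduction that kills one factor of $(p^D - 1)$. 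The key point is that $Y$ and any scalar multiple $\lambda Y$ generate the same projective element, and the image of $\langle Y \rangle$ in $PGL_2(\F_{p^D})$ has order dividing $|PGL_2(\F_{p^D})| = p^D(p^{2D}-1)$; lifting back multiplies by at most $\ord(\det Y \cdot (\text{something})) \mid (p^D-1)$, giving the bound.

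The main obstacle — really a bookkeeping obstacle rather than a deep one — is matching the exact arithmetic expression $(p^{2k}-p^k)(p^{2k}-1)/(p^k-1)$ in the statement. Taken literally with "$k$" this ignores the degree $d$ of $q$, so either the intended reading is $d \mid k$ or $q$ is over $\F_{p^k}$, or there is a typo and it should read with $p^{kd}$ throughout; I would state the lemma with the ambient field $\F_{p^{kd}}$ made explicit and verify $(p^{2kd} - p^{kd})(p^{2kd}-1)/(p^{kd}-1) = |GL_2(\F_{p^{kd}})| / (p^{kd}-1)$, then invoke: $\ord(Y)$ divides $|GL_2(\F_{p^{kd}})|$ trivially, and the factor $(p^{kd}-1)$ can be removed because passing to $PGL_2$ quotients by the order-$(p^{kd}-1)$ center and $Y$'s order downstairs still controls the period of the \emph{projectivized} sequence, while the scalar ambiguity in $f_n$ is itself periodic with period dividing $p^{kd}-1$ — so actually the safe claim, which I would prove, is that the period divides $|GL_2(\F_{p^{kd}})|$ outright, and then remark that a more careful centre-quotient argument yields the sharper stated divisor. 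The honest core of the proof is the one-line observation that $f_n \bmod q$ is an entry of $Y^n$ and hence periodic with period $\mid \ord(Y) \mid |GL_2(\F_{p^{kd}})|$; everything else is extracting the precise constant.
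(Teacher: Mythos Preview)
Your approach is essentially the same as the paper's: read off from Equation~\eqref{yn} that the period of $\{f_n \bmod q\}$ equals the multiplicative order $n_Y$ of $Y$ in the ambient matrix group, then bound $n_Y$ by the group order via Lagrange. The paper's proof is in fact less careful than yours --- it simply places $Y$ in $SL_2(\fp[x]/\langle q(x)\rangle)$ and invokes $|SL_2|$, glossing over the point you correctly raise that $\det Y=-1$ (so $Y\notin SL_2$ for odd $p$); the typographical inconsistencies you flag ($p(x)$ versus $q(x)$, and $k$ versus the degree $d$) are also real.
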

\begin{proof}
    Clearly, for $Y=\begin{pmatrix}x & 1 
 \\1 & 0\\\end{pmatrix}$, Equation~\eqref{yn} gives us a formula for $Y^n$ in terms of $f_n(x)$, $f_{n-1}(x)$ and $f_{n-2}(x)$. Since $Y$ must have a finite multiplicative order $n_y$ in the finite group $SL_2(\fp[x]/\langle q(x)\rangle)$, we must have $f_{n_y-1}(x)=0$, $f_n(x)=f_{n-2}(x)=1$. From this point onwards, the sequence repeats its terms, and therefore its period is given by $n_y$, which must divide the order of the matrix group. 
\end{proof}

\begin{lemma}
    Suppose that the adversary can compute integers $m$ and $n$ such that $f_{n-1}(g_\alpha(x))=f_{m-1}(g_\beta(x)) \pmod{p(x)} $ and $f_{n-2}(g_\alpha(x))=f_{m-2}(g_\beta(x)) \pmod{p(x)}$. Then, the adversary can compute a collision of size $\mathcal{O}(\max(m,n))$ for the Generalized Tillich-Z\"emor hash function $\phi$.
\end{lemma}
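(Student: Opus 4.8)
The plan is to exhibit two distinct messages with the same hash by exploiting the closed form \eqref{yn} for powers of the generator matrices. Recall that $A_0 = \begin{pmatrix}\alpha & 1 \\ 1 & 0\end{pmatrix}$ with $\alpha = g_\alpha(x) \bmod p(x)$, so that $A_0^n = \begin{pmatrix} f_n(g_\alpha(x)) & f_{n-1}(g_\alpha(x)) \\ f_{n-1}(g_\alpha(x)) & f_{n-2}(g_\alpha(x))\end{pmatrix}$, and similarly $A_1^m = \begin{pmatrix} f_m(g_\beta(x)) & f_{m-1}(g_\beta(x)) \\ f_{m-1}(g_\beta(x)) & f_{m-2}(g_\beta(x))\end{pmatrix}$, all reduced modulo $p(x)$. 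The message $0^n$ hashes to $A_0^n$ and the message $1^m$ hashes to $A_1^m$.

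First I would observe that the hypothesis gives equality of the $(1,2)$, $(2,1)$, and $(2,2)$ entries of $A_0^n$ and $A_1^m$: namely $f_{n-1}(g_\alpha(x)) = f_{m-1}(g_\beta(x))$ and $f_{n-2}(g_\alpha(x)) = f_{m-2}(g_\beta(x))$ modulo $p(x)$. Next I would argue that the $(1,1)$ entries must then also agree: both $A_0^n$ and $A_1^m$ lie in $SL_2(\fpk)$, so their determinants are both $1$, which forces $f_n f_{n-2} - f_{n-1}^2 = f_m f_{m-2} - f_{m-1}^2$; since the last two terms and $f_{n-2} = f_{m-2}$ already coincide, and assuming this common value is nonzero we can cancel it to conclude $f_n(g_\alpha(x)) = f_m(g_\beta(x))$. (Alternatively one invokes the recurrence $f_n = x f_{n-1} + f_{n-2}$ directly: but the arguments $g_\alpha$ and $g_\beta$ differ, so the determinant route is cleaner; if $f_{n-2}$ happens to be zero one instead uses that $f_{n-1}^2 = f_{m-1}^2$ together with $f_{n-1} = f_{m-1}$ and the structure of the matrix, or simply notes $A_0^n$ and $A_1^m$ then have the same anti-triangular shape and equal determinant.) Hence $A_0^n = A_1^m$ as matrices over $\fpk$, i.e.\ $H(0^n) = H(1^m)$.

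Finally I would produce the collision and bound its length. Since $n \neq m$ in general is not guaranteed, one should take the two messages to be $0^n$ and $1^m$ themselves: these are distinct binary strings because one consists only of zeros and the other only of ones (and both are nonempty, as $m,n \geq 1$ for the powers to be the stated matrices; if one of them is the empty word the argument still yields a nontrivial collision of the other with the empty word). Their hashes are equal by the previous paragraph. The length of the longer message is $\max(m,n)$, so the collision has size $\mathcal{O}(\max(m,n))$ as claimed.

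The main obstacle is the degenerate case handling in the second step: justifying that equality of the three ``lower'' entries propagates to the $(1,1)$ entry. The determinant argument needs the common value $f_{n-2}(g_\alpha(x)) = f_{m-2}(g_\beta(x))$ to be invertible in $\fpk$ in order to cancel it; I would either add this as an implicit nondegeneracy condition (it holds generically, since $f_{n-2}$ vanishing modulo $p(x)$ is a codimension-one event), or handle the vanishing case separately by noting that then the matrices $A_0^n$, $A_1^m$ are both of the form $\begin{pmatrix} * & t \\ t & 0\end{pmatrix}$ with equal $t = f_{n-1} = f_{m-1}$ and equal determinant $-t^2$, which already forces the top-left entries to be equal once $t \neq 0$, and $t = 0$ would make the determinant $0$, contradicting membership in $SL_2$. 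Either way the conclusion $A_0^n = A_1^m$ stands.
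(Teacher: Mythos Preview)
Your proposal takes a different route from the paper. The paper's proof is a single line: it invokes the recurrence \eqref{recur} directly to conclude $f_n(g_\alpha(x)) = f_m(g_\beta(x)) \pmod{p(x)}$ from the two hypotheses, whence $A_0^n = A_1^m$ and $\phi(0^n) = \phi(1^m)$. You instead aim to equate the $(1,1)$ entries via a determinant argument, explicitly noting (correctly) that the recurrence route is not immediate because the evaluation points $g_\alpha$ and $g_\beta$ differ.

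However, your determinant step contains a genuine error. You assert that $A_0^n$ and $A_1^m$ lie in $SL_2(\fpk)$, but $\det A_0 = \det A_1 = -1$, so $\det A_0^n = (-1)^n$ and $\det A_1^m = (-1)^m$. Since this section is explicitly in characteristic $p \neq 2$, these determinants need not coincide: the identity $f_n f_{n-2} - f_{n-1}^2 = f_m f_{m-2} - f_{m-1}^2$ that you use holds only when $m \equiv n \pmod 2$. When the parities differ, the same computation gives
\[
\bigl(f_n(g_\alpha) - f_m(g_\beta)\bigr)\, f_{n-2}(g_\alpha) = (-1)^n - (-1)^m = \pm 2 \neq 0,
\]
which in fact forces $f_n(g_\alpha) \neq f_m(g_\beta)$ whenever $f_{n-2}(g_\alpha)$ is nonzero. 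More directly, under the hypotheses the recurrence yields $f_n(g_\alpha) - f_m(g_\beta) = (g_\alpha - g_\beta)\, f_{n-1}(g_\alpha)$, so the $(1,1)$ entries agree only in the degenerate situations $\alpha = \beta$ or $f_{n-1}(g_\alpha) = 0$. Your case analysis at the end therefore cannot be completed as written; the obstruction is not a removable nondegeneracy assumption but the parity/evaluation mismatch itself. You have correctly spotted that the paper's one-line appeal to \eqref{recur} glosses over the fact that $g_\alpha \neq g_\beta$, but your replacement argument does not close the gap.
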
\begin{pf}
    In this case, it is clear from \eqref{recur} that $f_{n}(g_\alpha(x))=f_{m}(g_\beta(x)) \pmod{p(x)} $, and so $A_0^n=A_1^m$, i.e.\ $\phi(0^n)=\phi(1^m)$.
\end{pf}
    Note that the above result includes the cases $m=0$ and $n=0$, i.e.\ finding collisions with the identity matrix. However, by looking at the expression \eqref{fnx} for $f_n(x)$, one sees that even for the simplest equation $f_n(x)=0 \pmod{p(x)}$, finding a solution for the value of $n$ is not straightforward, since in \eqref{fnx} $n$ occurs both as a polynomial term (in the binomial coefficients) and in the exponent of $2$. This problem seems to be a complex amalgamation of a generalized discrete logarithm with a polynomial rather than a single power. One may extend the above result to products with more terms.

\begin{lemma} Let $\fp[x]/\langle q(x) \rangle$ be a finite field.
    If an adversary can find integers $m$ and $n$ such that the following relations hold
    \begin{align*}
       & f_m(f_n(x))+f_{m-1}(f_{n-1}(x))=1 \pmod {q(x)} \\
       & f_m(f_{n-1}(x))+f_{m-1}(f_{n-2}(x))=0 \pmod {q(x)} \\
       & f_{m-1}(f_{n}(x))+f_{m-2}(f_{n-1}(x))=0 \pmod {q(x)} \\
       & f_{m-1}(f_{n-1}(x))+f_{m-2}(f_{n-2}(x))=1 \pmod {q(x)},
    \end{align*} then $H(0^m 1^n)=H()$ gives a collision with the hash $H()$ of the empty word.
\end{lemma}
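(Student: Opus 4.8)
The plan is to identify the four displayed equations as exactly the four entrywise conditions for the matrix product $A_0^m A_1^n$ to equal the identity matrix. First I would recall from Equation~\eqref{yn} that for the matrix $Y=\begin{pmatrix}x & 1 \\ 1 & 0\end{pmatrix}$ we have $Y^n=\begin{pmatrix}f_n(x) & f_{n-1}(x) \\ f_{n-1}(x) & f_{n-2}(x)\end{pmatrix}$, so that $A_0^m = \begin{pmatrix}f_m(g_\alpha(x)) & f_{m-1}(g_\alpha(x)) \\ f_{m-1}(g_\alpha(x)) & f_{m-2}(g_\alpha(x))\end{pmatrix} \pmod{q(x)}$ and similarly $A_1^n = \begin{pmatrix}f_n(g_\beta(x)) & f_{n-1}(g_\beta(x)) \\ f_{n-1}(g_\beta(x)) & f_{n-2}(g_\beta(x))\end{pmatrix} \pmod{q(x)}$, where as before $g_\alpha, g_\beta$ are the polynomial representatives of $\alpha,\beta$ in $\fp[x]/\langle q(x)\rangle$. (I note the statement as written uses the variable name $x$ inside $f$; I would keep the composition $f_m(f_n(x))$ notation if the intent is the special generators with $\alpha=\beta=x$-type entries, but the cleanest reading consistent with the preceding lemma is $H(0^m1^n)=A_0^mA_1^n$ with the two composed polynomial matrices as above — I would state explicitly which convention is meant before computing.)

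Next I would multiply the two $2\times 2$ matrices and read off the $(1,1), (1,2), (2,1), (2,2)$ entries of $A_0^m A_1^n$. The $(1,1)$ entry is $f_m(f_n(x)) + f_{m-1}(f_{n-1}(x))$ under the composition reading (or $f_m(g_\alpha)f_n(g_\beta) + f_{m-1}(g_\alpha)f_{n-1}(g_\beta)$ under the product reading); the remaining three entries are obtained analogously. Setting the product equal to $I = \begin{pmatrix}1 & 0 \\ 0 & 1\end{pmatrix}$ in $SL_2(\fp[x]/\langle q(x)\rangle)$ then yields precisely the four congruences in the statement. Since $H(0^m1^n) = A_0^m A_1^n$ by definition of the hash, the four relations holding simultaneously gives $H(0^m1^n) = I = H()$, which is the claimed collision with the empty word. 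Finally, the word $0^m1^n$ is nonempty whenever $m+n>0$, so this is a nontrivial collision; this should be remarked.

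The proof is essentially a one-line matrix multiplication once the bookkeeping is set up, so there is no real obstacle — the only thing requiring care is getting the indices straight in the four entries (the shifts $m, m-1, m-2$ paired correctly with $n, n-1, n-2$ in each of the four positions) and being explicit about whether the entries involve composition $f_m \circ f_n$ or a product of $f$-values, so that the statement matches the earlier Lemma and Equation~\eqref{recur}. I would therefore write the $2\times 2$ product in full in a single \texttt{align*} block, match its four entries to the four displayed equations, invoke $H(0^m1^n)=A_0^mA_1^n$, and conclude.
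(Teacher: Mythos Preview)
Your approach is exactly what the paper intends: the lemma is stated there without proof as a direct extension of the preceding one, and the argument is precisely to write out $A_0^m A_1^n$ using Equation~\eqref{yn} and match the four entries against $I$. Your caution about the notation is also well placed --- the displayed relations should be read as \emph{products} $f_m(\alpha)f_n(\beta)+f_{m-1}(\alpha)f_{n-1}(\beta)=1$, etc., not as compositions $f_m\!\circ\! f_n$; the paper's ``$f_m(f_n(x))$'' is a typographical slip, so use the product reading (consistent with the previous lemma's $f_n(g_\alpha(x))$ convention) when you write it up.
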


Note that in both the above results, the problem potentially becomes even more difficult if one constrains the values of $m$ and $n$ to yield practical-sized collisions.

\subsection{Malicious paramaters}

As seen in the above discussion, the security of the generalized Tillich-Z\"emor hash functions relies on the choice of the irreducible polynomial through which the finite field is constructed. Thus, a malicious construction of this polynomial can lead to the designer of the hash function being able to easily compute collisions.

  In \cite{attacking}, the authors produce malicious polynomials in finite fields $\f_{2^m}$, reducing the problem of collision-finding in the matrix group to the search for an irreducible polynomial of degree $m$, such that $m + 1$ is a proper divisor of $q - 1$ or $q + 1$, where $q = 2^m$. Over the field so constructed, the matrix $A$ has a small order 263, and leads also to a non-trivial collision, which the authors exemplify for $SL_2(\F_{2^{131}})$. This attack fails if $q - 1$ and $q + 1$ are both primes or if the factorization of their $\gcd$ involves large primes, since the collisions yielded are then too long.

In \cite{onsec}, the authors also deal exclusively with the characteristic 2 case, and propose an algorithm to decide whether the given
irreducible polynomial leads to  a vulnerable system under the attack of \cite{attacking}, and propose a solution to fix the vulnerability.  In the following result, we describe a malicious design for the choice of the polynomial defining the finite field, for odd characteristic $p$.

\begin{theorem}
    If one can find $N$ such that $\gcd(f_N(x)-1, f_{N-1}(x))$ has an irreducible divisor $q(x)$ of degree $d$, one can find a collision of size $\mathcal{O}(N)$ for the hash function $\phi(x)$ over the finite field ${\fp[x]}/{\langle q(x)\rangle}$. On the other hand, given a fixed finite field ${\fp[x]}/{\langle q(x) \rangle}$, if one can find an integer $N$ such that $q(x)$ divides $\gcd(f_N(x)-1, f_{N-1}(x))$ then one can find collisions of size $\mathcal{O}(N)$ for $\phi$.
\end{theorem}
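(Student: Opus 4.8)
The plan is to unwind the matrix interpretation of $f_N$ that was established in Equation~\eqref{yn}. Recall that for the ``variable'' matrix $Y=\begin{pmatrix}x & 1 \\ 1 & 0\end{pmatrix}$ we have $Y^N=\begin{pmatrix}f_N(x) & f_{N-1}(x) \\ f_{N-1}(x) & f_{N-2}(x)\end{pmatrix}$. The key observation is that $Y^N$ reduces to the identity matrix modulo a polynomial $q(x)$ precisely when $q(x)\mid f_N(x)-1$, $q(x)\mid f_{N-1}(x)$, and $q(x)\mid f_{N-2}(x)-1$; but the first two conditions already force the third, since $f_{N-2}(x)=f_N(x)-xf_{N-1}(x)$ by the recurrence \eqref{recur}, so $f_{N-2}(x)\equiv 1-x\cdot 0 = 1\pmod{q(x)}$. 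Hence the single divisibility statement $q(x)\mid \gcd(f_N(x)-1,\,f_{N-1}(x))$ is equivalent to $Y^N\equiv \Id \pmod{q(x)}$.

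For the first assertion I would argue as follows. Suppose $N$ is such that $g(x):=\gcd(f_N(x)-1,f_{N-1}(x))$ has an irreducible divisor $q(x)$ of degree $d$. Construct the finite field $\fp[x]/\langle q(x)\rangle$, and view the generalized Tillich-Z\'emor hash function $\phi$ over this field with generators $A_0,A_1$ obtained by substituting $\alpha,\beta$ for $x$. Since $q(x)\mid f_N(x)-1$ and $q(x)\mid f_{N-1}(x)$, the discussion above gives $Y^N\equiv \Id\pmod{q(x)}$ as a matrix over $\fp[x]/\langle q(x)\rangle$; in particular, taking $Y$ with $x\mapsto$ the class of $x$, this says that the matrix $\begin{pmatrix}x & 1\\ 1 & 0\end{pmatrix}$ has order dividing $N$ in $SL_2(\fp[x]/\langle q(x)\rangle)$. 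Now I would want a collision between two \emph{distinct} nonempty messages. The cleanest route: if $\alpha\equiv x$ or $\beta\equiv x$ modulo $q(x)$ then $A_0^N = \Id$ or $A_1^N=\Id$ directly, giving $\phi(0^N)=\phi()$ or $\phi(1^N)=\phi()$. In the general malicious setting the designer is free to \emph{choose} $\alpha=\beta=x$ (or more precisely the residue class of $x$), so this is exactly the malicious design: pick the defining polynomial to be such a divisor $q(x)$ and set the generator parameters equal to $x$. Then $\phi(0^N)=H()$ is a collision of length $N$, i.e. of size $\mathcal{O}(N)$. The second assertion is the ``converse direction'' and is immediate from the same equivalence: given the field $\fp[x]/\langle q(x)\rangle$ fixed in advance, if one finds $N$ with $q(x)\mid\gcd(f_N(x)-1,f_{N-1}(x))$, then again $Y^N\equiv\Id$, and with the parameter $\alpha$ (or $\beta$) chosen as $x$ one gets $A_0^N=\Id$, hence the collision $\phi(0^N)=\phi()$ of size $\mathcal{O}(N)$.

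I would present the proof in this order: (1) record the matrix identity $Y^N=\begin{pmatrix}f_N & f_{N-1}\\ f_{N-1} & f_{N-2}\end{pmatrix}$ and note that $f_{N-2}=f_N-xf_{N-1}$ collapses the three scalar congruences into two; (2) conclude the equivalence ``$q(x)\mid\gcd(f_N(x)-1,f_{N-1}(x))$'' $\iff$ ``$Y^N\equiv\Id\pmod{q(x)}$''; (3) for the first statement, extract the irreducible factor $q(x)$, build the field, set the generator parameter equal to the residue class of $x$, and read off $A_0^N=\Id$, giving $\phi(0^N)=\phi()$; (4) for the second statement, run the same final step with the field given in advance. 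The main obstacle — and the point that deserves the most care in writing — is the logical gap between ``$Y^N$ is the identity'' and ``the hash function has a collision'': the former is a statement about the formal matrix with entry $x$, while $\phi$ is built from $A_0,A_1$ with entries $\alpha,\beta$. The resolution is precisely that this is a theorem about \emph{malicious} parameter choice, so one is entitled to take $\alpha$ (and/or $\beta$) to be the class of $x$ in $\fp[x]/\langle q(x)\rangle$; I would state this choice explicitly rather than leaving it implicit, since otherwise the theorem as phrased would need the extra hypothesis that $x$ lies in the subgroup generated by $\alpha$ or $\beta$. A secondary, more cosmetic point is to note that the collision $\phi(0^N)=\phi()$ is non-trivial exactly when $N$ is smaller than the order of $A_0$, which is automatic here since $N$ is (a multiple of) that order by construction — so one may instead phrase the collision as $\phi(0^{N}1^{m})=\phi(1^{m})$ for any $m$, or between $0^N$ and the empty word, whichever the authors prefer.
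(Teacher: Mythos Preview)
Your proposal is correct and follows essentially the same route as the paper's own proof: show that the divisibility hypothesis forces $Y^N\equiv\Id\pmod{q(x)}$, hence the order of $Y$ in $SL_2(\fp[x]/\langle q(x)\rangle)$ divides $N$, and read off the collision $H(0^N)=H()$. Your write-up is in fact more careful than the paper's on two points the paper leaves implicit --- the use of the recurrence to deduce $f_{N-2}\equiv 1$ from the two given congruences, and the need to take the generator parameter to be the residue class of $x$ --- so there is nothing to add.
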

\begin{pf}
    If one can find $N$ such that $\gcd(f_{N}(x), f_{{N}-1}(x))$ has an irreducible divisor $q(x)$ of degree $d$, then $f_{N}(x)=1\pmod q(x)$, $f_{N+1}(x)=0 \pmod q(x)$, then the sequence $\{f_{n}(x) \pmod {\fp[x]}/{\langle q(x)\rangle }\}$ has a period dividing $n_y$, and the multiplicative order of $Y$ in $SL_2({\fp[x]}/{\langle q(x) \rangle })$ divides $N$. Thus, one has a collision of $H(0^{N})$ with the hash of the empty word.
\end{pf}

\begin{example}
    $q(x)=x^{12} + 2x^{10} + x^6 + 2x^4 + 2x^3 + x^2 + x + 1$ is an irreducible polynomial over $\f_3$ such that $\{f_n(x) \pmod q(x)\}$ has period $531440=3^{12}-1$. Thus, in $SL_2(\F_{3^{12}})$, we always have collisions $H(0^{531440})=H(1^{531440})=H()$.
\end{example}

\bibliographystyle{plain}
\bibliography{references}

\end{document}